\newtheorem{theorem}{Theorem}
\theoremstyle{plain}
\newtheorem*{lemma}{Lemma}
\numberwithin{equation}{section}
\begin{document}
\title[Gini Index Variation and Welfare Index]{On the joint distribution of variations of the Gini index and Welfare
indices}
\author{$^{(1)}$ Pape Djiby MERGANE}
\email{pdmergane@ufrsat.org}
\author{$^{(1,2,3)}$Gane Samb LO}
\email{gane-samb.lo@ugb.edu.sn}
\author{$^{(4)}$ Tchilabalo Abozou Kpanzou}
\email{kpanzout@gmail.com}

\begin{abstract}
The aim of this paper is to establish the asymptotic behavior of the mutual influence of the Gini index and the poverty measures by using the Gaussian fields described in Mergane and Lo(2013). The results are given as representation theorems using the Gaussian fields of the unidimensional or the bidimensional functional Brownian bridges. Such representations, when combined with those already available, lead to joint asymptotic distributions with other statistics of interest like growth, welfare and inequality indices  and then, unveil interesting results related to the mutual influence between them. The results are also 	appropriate for studying whether a growth is fair or not, depending on the variation of the inequality measure. Datadriven applications are also available. Although the variances may seem complicated at a first sight, their computations which are needed to get confidence intervals of the indices, are possible with the help of R codes we provide. Beyond the current results, the provided representations are useful in connection with different ones of other statistics.\\

\bigskip
\bigskip \noindent $^{(1)}$ LERSTAD, Gaston Berger University, Saint-Louis,
Senegal.\newline
$^{(2)}$ LSTA, Pierre and Marie Curie University, Paris VI, France.\newline
\noindent $^{(3)}$ AUST - African University of Sciences and Technology,
Abuja, Nigeria\newline
\noindent $^{(4)}$ FAST, Kara University, Togo.\newline

\noindent \textit{Corresponding author}. Gane Samb Lo. Email :
gane-samb.lo@edu.ugb.sn, ganesamblo@ganesamblo.net\\
Permanent address : 1178 Evanston Dr NW T3P 0J9,
Calgary, Alberta, Canada.

\end{abstract}

\keywords{Functional empirical process, asymptotic normality, welfare and
inequality measure, Gini's index; General Poverty measures; Weak laws; Gaussian processes and fields; Pro and anti-poor growth.}
\subjclass[2010]{Primary 60F05, 60F17; Secondary 91B82, 91C05}

\maketitle


\Large

\section{INTRODUCTION AND MOTIVATION}

\label{sec1}

In this paper, the asymptotic behavior of the Gini inequality index (1921) is jointly studied with a general class of welfare indices within the frame of unified Gaussian fields both for in a one phase frame (fixed time) and in a two phase frame (variation between two periods). Beyond the results themselves, the obtained asymptotic representations allow future couplings of the studied statistics with other indices. These couplings will lead to joint asymptotic distributions, enabling interesting comparison and influence studies between indices.\\
 
\noindent We begin by a survey on the Gini index, based on historical and recent works, concerning its statistical properties, its  asymptotic distributions and some of its generalizations. In a second step, we will explain the notion of Gaussian fields we mentioned before.\\

\noindent The Gini index (1921) has played and is playing an important role in the measurement of
economic inequality since its development by Corrado Gini in the early 20th
century. Besides, this index is also used in many other disciplines, including Biology (Graczyk, 2007), Astronomy (Lisker,
2008), Environment (Druckman and Jackson, 2008; Groves-Kirkby, 2009).\\

\noindent Various expressions for the Gini index are given by authors such as Davidson (2009), Dorfman (1979), Duclos and Araar (2006). Extended Gini indices are also developed (see e.g., Weymark, 1981; Yitzhaki, 1983; Chakravarty, 1988). Over the years, statistical inference for the Gini index has attracted many researchers. For example, Gastwirth (1972) discussed the estimation of the index from that of the Lorenz curve. Cowell and Flachaire (2007) have developed its influence function and looked at how influenced is its non-parametric estimator to extreme values. Moni (1991) also studied the Gini measure by means of the influence function. On their part, Qin et al. (2010) constructed empirical likelihood confidence intervals for the Gini coefficient and showed that these perform well, but only for large samples. In order to improve inference based on it, Sarno (1998) proposed, in a non-parametric setting, a new stabilizing transformation for the sample Gini coefficient.\\

\noindent Fakoor et al. (2011) considered non-parametric estimators of the Lorenz curve and Gini index based on a sample from the corresponding length-biased distribution, showed that such estimators are strongly consistent for the Gini index, and derived an asymptotic normality for that index. Davidson (2009) developed a reliable standard error for the plug-in estimator of the Gini index and derived an effective bias correction. Mart\'inez-Camblor and Corral (2009) developed results on exact and an asymptotic distribution of the Gini coefficient. Asymptotic distribution of the S-Gini index is derived by Zitikis and Gastwirth (2002), who provided an explicit formula for the asymptotic variance. More on inference for the extended Gini indices can be found in, e.g., Xu (2000) and Barrett and Donald (2000).\\

\noindent But the Gini's index is one of a quite few number of inequality measures that are available in the literature. A considerable number of them has been gathered in a class named Theil-like family and studied jointly with welfare statistics. This study dit not concern the Gini's index nor the new Zenga's (\cite{zenga1}) inequality measure. Because of its great importance, a similar handling for the Gini's measure seems to be highly recommended alongside comparison investigations.\\ 

\noindent As mentioned above, a new approach, that is set to put the asymptotic results of indices
related to welfare and inequality analysis in a unified frame of one
Gaussian field,  was attempted in  \cite{merganelo13}. In that paper, a large
class of inequality measures named as the Theil-like family has been
jointly studied with an other general class of poverty measures known under the name of
General Poverty Index (\textit{GPI}), both with respect to a spatial (horizontal) and a
time (vertical) perspective. Such an approach leads to powerful tools when comparing different indicators or their variation over the space or the time scale. Since the joint asymptotic results are expressed with respect to one common Gaussian process, the method makes easy the comparison of the results for one particular index with those for different indices or statistics using the same frame. Our aim is to offer such representations for the Gini's index and to benefit from them, in order to have insightful relations with the GPI. These representations will be used later in a full study of all available inequality measures. In the coming Subsection \ref{subsec12}, we will give a full description of the probability spaces holding the representations.\\

\noindent Our main results start from the complete description of the asymptotic
representation of the Gini's index in a Gaussian field and in a residual Gaussian process $\beta $\ already introduced and studied in \cite{logs} for the fixed time scheme in Theorem \ref{theoGI}. These
results are extended to the two phase variation scheme in Theorem \ref{theodeltaGI}.
Finally, their combination with available representations, yields successful descriptions of the mutual influence of the Gini's index 
and usual poverty indices including the Sen and Kakawani ones in Theorem \ref{theoR}. Unlike former works on the topic, we
appeal to the Bahadur Representation Theorem (see \cite{bahadur66}) as a tool for handling L-statistis in the lines of \cite{logs}. Datadriven studies are included. But beyond this, the representations will serve in connection with similar ones for different indices of interest.\\  

\noindent We will exclusively limit our study in the field of the welfare
analyis and focus on the Gini's index and the General poverty measure. In future works, extentions of our current results will be extended to 
extension of the Gini's measures : the Generalized Gini, S-Gini, E-Gini. (see Barrett and Donald 2009 \cite{barrett}).\\

\noindent Let us recall that we may and do measure poverty (or richness) with the help of poverty indices $J$ based on
the income variable $X$.  To each income, a poverty line $Z>0$ is associated. This poverty line is defined the minimum income under which an individual is declared as \textit{poor}. Over two periods $s=1$ and $t=2,$ we say that we
have a gain against poverty when $\Delta J(s,t)=J(t)-J(s) \leq 0$, or simply
a growth against poverty. But this variation is not enough to describe the
situation of the population, one must be sure that, meanwhile, the income
did not become more \textit{unequally} distributed, that is the appropriate
inequality coefficient $I$ did not increase. One can achieve this by
studying the ratio $R=\Delta J(s,t)/\Delta I(s,t)$, where $\Delta I(s,t) =
I(t) - I(s)$ denotes the variation of the distribution of the income
variable.\newline

\noindent To make the ideas more precise, let us suppose that we are
monitoring the poverty scene on some population over the period time $[1,2]$
and let $Y=(X^1,X^2)$ be the income variable of that population at periods $1$
and $2$. Let us consider one sample of $n \geq 1$ individuals or households,
and observe the income couple $Y_{j}=(X^{(2)}_{j},X^{(2)}_{j})$, $j=1,...,n$. For
each period $i \in \left\{1,2\right\}$, we also denoted by $X^i_{1,n} \leq
X^i_{2,n} \leq \cdots X^i_{n,n}$ the order statistics. We assume that $X ^ i 
$ is strictly positive, and we compute the poverty measure $J_{n}(i)$ and
the inequality measure $I_{n}(i)$.\\

\noindent For poverty, we consider the Generalized Poverty Index (GPI)
introduced by Lo \textit{et al.} \cite{loGPI} and Lo \cite{loGPI2013} as an
attempt to gather a large class of poverty measures reviewed in Zheng \cite%
{zheng} defined as follows for period $i$,

\begin{equation}
J_{n}(i)=\frac{A(Q_{n}(i),n,Z(i))}{nB(Q_{n}(i))}\sum_{j=1}^{Q_{n}(i)}w(\mu
_{1}n+\mu _{2}Q_{n}(i)-\mu _{3}j+\mu _{4})\text{\ }d\left( \frac{%
Z(i)-X^i_{j,n}}{Z(i)}\right)  \label{gpi01}
\end{equation}

\noindent where $B(Q_n(.))=\sum_{j=1}^{n} w(j),$ $Z(i)$ is the poverty line at time $t=i$, $Q_n(.)$ is the number of poor, $\mu _{1},\mu
_{2},\mu _{3}$ and $\mu _{4}$\ are constants, $A(u,v,s),$ $w(t),$ and $d(y)$
are measurable functions of $(u,v,s) \in \mathbb{N}\times \mathbb{N} \times 
\mathbb{R}^{\ast}_{+},$ $t \in \mathbb{R}^{\ast}_{+},$ and $x \in (0,1).$
By particularizing the functions $A$ and $w$ and by giving fixed values to
the $\mu _{i}^{\prime }s,$ we may find almost all the available indices, as
we will do it later on. \textit{In the sequel, (\ref{gpi01}) will be called
a poverty index (indices in the plural) or simply a poverty measure
according to the economists' terminology.}.\\

\bigskip \noindent This class includes the most popular indices such as those of Sen (%
\cite{sen}), Kakwani (\cite{kakwani}), Shorrocks (\cite{shorrocks}),
Clark-Hemming-Ulph (\cite{chu}), Foster-Greer-Thorbecke (\cite{fgt}), etc.
See Lo (\cite{loGPI2013}) for a review of the GPI. From the works of many
authors (\cite{lo1}, \cite{sl} for instance), $J_{n}(i)$ is an
asymptotically sufficient estimate of the exact poverty measure

\begin{equation}
J(i) = \int_{0}^{Z(i)} L \left(x,F_{(2),i}\right) d\left(\frac{Z(i) - x}{Z(i)}%
\right)\,dF_{(2),i}(x)
\end{equation}

\noindent where $F_{(2),i}$ is the distribution function of $X^{(i)} \,(i=1,2)$, and $L$ is some
weight function.

\bigskip

\noindent As for the inequality measure, we only use the Gini index $(GI)$
which is based on the Lorenz curve (1905). And, for a given date $i \in
\left\{1, 2\right\}$, we denote by

\begin{equation}
GI_n(i) = \frac{1}{\mu_n(i)}\frac{1}{n} \sum_{j=1}^{n} \left(\frac{2 j - 1}{n%
} - 1 \right) X^{(i)}_{j,n}
\end{equation}

\noindent the empirical measure of the Gini index (see \textit{Greselin et
al.}, \cite{greselin09}), and its continuous form is defined as follows

\begin{equation}
GI(i) = \frac{1}{\mu(i)}\int_{0}^{+\infty}F_{(2),i}^{-1}(x)\left(2 F_{(2),i}(x) - 1
\right)\,dF_{(2),i}(x),
\end{equation}

\noindent where $\mu(i) = \mathbb{E}(X^i)$ is the mathematical expectation of $X^i$ and $F_{(2),i}^{-1}$ denotes the generalized inverse of the \textit{cdf} $F_{(2),i}$.\\

\noindent The motivations stated above lead to the study of the behavior of 
\begin{equation*}
(\Delta J_{n}(s,t),\Delta GI_{n}(s,t)),
\end{equation*}
defined for two periods $s<t$, as an estimate of the unknown value of 
\begin{equation*}
(\Delta J(s,t),\Delta GI(s,t)).
\end{equation*}

\bigskip

\noindent Precisely confidence intervals of

\begin{equation*}
R(s,t)=\frac{\Delta J(s,t)}{\Delta GI(s,t)} 
\end{equation*}

\noindent will be an appropriate set of tools for the study of the mutual
influence of the Gini index and the poverty measures.

\bigskip

\noindent To achieve our goal we need a coherent asymptotic theory allowing
the handling of longitudinal data as it is the case here and a stochastic
process approach leading to asymptotic sub-results with the help of the
continuity mapping theorem.

\bigskip

\noindent The rest of the paper is structured as follows. In the rest of this Section \ref{sec1}, we describe the probability space on which the asymptotic representations will take place. In Section \ref%
{ABGI} we provide a study on the asymptotic behavior of the Gini index. Then
in Section \ref{VGI},a complete study of the variation of this index between two given dates is provided. And next, Section \ref{MI}, we treat the
mutual influence of the latter on the Generalized Poverty Index (GPI)
introduced by Lo \textit{et al.} \cite{loGPI} and Lo \cite{loGPI2013}. Section \ref%
{App} is devoted to applications of the theoretical results with datadriven examples. The paper ends with a conclusion in Section \ref%
{CR}.\\

\bigskip The notation used in the paper may be seen as complicated, but knowing the following simple facts may help in making them very comprehensive. The subscript $(1)$ means that we are working un on dimension, where the randoms variables do not have a superscript. In dimension 2, we always have the subscript (2) to main functions : \textit{cdf}'s, copulas, empirical process,etc. When followed by $i$, like $F_{(2),i}$, it refers to a margin. For example $F_{(2),1}$ is the first marginal \textit{cdf} of $F_{(2)}$. Still in dimension 2, any superscript $i=1,2$ refers to the first coordinate of a couple.\\

\subsection{Notations and Probability Space} \label{subsec12} $ $\\

\noindent In this Subsection, we complete the notations we already gave and precise our probability space.\\

\noindent \textbf{Univariate frame}. \noindent We are going to describe the general Gaussian field in which we present our results. Indeed, we use a unified approach when dealing with the asymptotic theories of the welfare statistics. It is based on the Functional Empirical Process (\textit{fep}) and its Functional Brownian Bridge (\textit{fbb}) limit. It is laid out as follows.\\

\noindent When we deal with the asymptotic properties of one statistic or index at a fixed time, we suppose that we have a non-negative random variable of interest which may be the income or the expense $X$ whose probability law on $(\mathbb{R},\mathcal{B}(\mathbb{R}))$, the Borel measurable space on $\mathbb{R}$, is denoted by $\mathbb{P}_{X}.$ We consider the space $\mathcal{F}_{(1)}$ of measurable real-valued functions $f$ defined on $\mathbb{R}$\ such that \begin{equation*}
V_{X}(f)=\int (f-\mathbb{E}_{X}(f))^{2}d\mathbb{P}_{X}=\mathbb{E}(f(X)-\mathbb{E}(f(X))^{2}<+\infty ,
\end{equation*}

\noindent where 
\begin{equation*}
\mathbb{E}_{X}(f)=\mathbb{E}f(X).
\end{equation*}

\bigskip \noindent On this functional space $\mathcal{F}_{(1)},$\ which is endowed with the $L_{2}
$-norm
\begin{equation*}
\left\Vert f\right\Vert _{2}=\left( \int f^{2}d\mathbb{P}_{X}\right) ^{1/2},
\end{equation*}

\noindent we define the Gaussian process $\{\mathbb{G}_{(1)}(f),f\in \mathcal{F}_{(1)}\},$
which is characterized by its variance-covariance function

\begin{equation*}
\Gamma_{(1)}(f,g)=\int^{2}(f-\mathbb{E}_{X}(f))(g-\mathbb{E}_{X}(g))d\mathbb{P%
}_{X},(f,g)\in \mathcal{F}_{(1)}^{2}.
\end{equation*}
 
\noindent This Gaussian process is the asymptotic weak limit of the sequence of functional empirical processes (fep) defined as follows. Let $X_{1},X_{2},...$ be a sequence of independent copies of $X$. For each $n\geq 1,$ we define the functional empirical process associated with $X$ by 
\begin{equation*}
\mathbb{G}_{n,(1)}(f)=\frac{1}{\sqrt{n}}\sum_{i=1}^{n}(f(X_{i})-\mathbb{E}%
f(X_{i})),f\in \mathcal{F}_{(1)},
\end{equation*}

\noindent and denote the integration with respect to the empirical measure by

\begin{equation*}
\mathbb{P}_{n,(1)}(f)=\frac{1}{n} \sum_{i=1}^{n}(f(X_{i}), \ f\in \mathcal{F}_{(1)},
\end{equation*}

\noindent Denote by $\ell^{\infty }(T)$ the space of real-valued bounded functions defined on $T=\mathbb{R}$ equipped with its uniform topology. In the terminology of the weak convergence theory, the sequence of objects $\mathbb{G}_{n,(1)}$  weakly converges to $\mathbb{G}_{(1)}$ in $\ell^{\infty}(\mathbb{R})$, as stochastic processes indexed by  $\mathcal{F}_{(1)}$, whenever it is a Donsker class. The details of this highly elaborated theory may be found in Billingsley \cite{billingsley}, Pollard \cite{pollard}, van der Vaart and Wellner \cite{vaart} and similar sources.\\

 \noindent we only need the convergence in finite distributions which is a simple consequence of the multivariate central limit theorem, as described in Chapter 3 in Lo \cite{wcia-srv-ang}.\\

\noindent We will use the Renyi's representation of the random variable $X_i$'s of interest by means (\textit{cdf}) $F_{(1)}$ as follows
$$
X=_{d}F_{(1)}^{-1}(U),
$$

\noindent where $U$ is a uniform random variable on $(0,1)$, $=_{d}$ stands for the equality in distribution and $F^{-1}_{(1)}$ is the generalized inverse of $F_{(1)}$, defined by

$$
F_{(1)}^{-1}(s)=\inf \{x, F_{(1)}(x)\geq s\}, \ s \in (0,1).
$$

\noindent Based on these representations, we may and do assume that we are on a probability space $(\Omega,\mathcal{A},\mathbb{P})$ holding a sequence of independent $(0,1)$-uniform random variables $U_1$, $U_2$, ..., and the sequence of independent observations of $X$ are given by 

\begin{equation}
X_{1}=F_{(1)}^{-1}(U_1), \ \ X_{2}=F_{(1)}^{-1}(U_2), \ \ etc. \label{repRenyi}
\end{equation}

\noindent \noindent For each $n\geq 1$, the order statistics of $U_1,...,U_n$ and of $X_1,...,X_n$ are denoted respectively by $U_{1,n}\leq \cdots \leq U_{n,n}$ and $X_{1,n}\leq \cdots \leq X_{n,n}$.\\

\noindent \noindent To the sequences of $(U_n)_{n\geq 1}$, we also associate the sequence of real empirical functions

\begin{equation}
\mathbb{U}_{n,(1)}(s)=\frac{1}{n} \#\{i,1\leq i \leq n, \ U_i \leq s\}, \ s\in(0,1) \ n\geq 1 \label{empiricalfunctionU}
\end{equation}

\bigskip \noindent and the the sequence of real uniform quantile functions
\begin{equation}
\mathbb{V}_{n,(1)}(s)=U_{1,n}1_{(0\leq s \leq 1/n)}+\sum_{j=1}^{n}U_{j,n}1_{((i-1)/n\leq s \leq (i/n))}, \ s\in(0,1), \ n\geq 1 \label{quantilefunctionU}
\end{equation}

\noindent and next, the sequence of real uniform empirical processes 
\begin{equation}
\alpha_{n,(1)}(s)=\sqrt{n}(\mathbb{U}_{n,(1)}-s), \ s\in(0,1) \ n\geq 1 \label{empiricalprocessU}
\end{equation}

\bigskip 
\noindent and the sequence of real uniform quantile processes

\begin{equation}
\gamma_{n,(1)}(s)=\sqrt{n}(s-\mathbb{V}_{n,(1)}), \ s\in(0,1) \ n\geq 1. \label{quantileprocessU}
\end{equation}

\bigskip \noindent The same can be done for the sequence $(X_n)_{n\geq 1}$, and we obtain the associated sequence of real empirical procecesses a 

\begin{equation}
\mathbb{G}_{n,r,(1)}(x)=\sqrt{n} \left( \mathbb{F}_{n,(1)}(x)-F_{(1)}(x)\right), \ x\in \mathbb{R}, \ n\geq 1 \label{empiricalprocess}
\end{equation}

\bigskip \noindent where 

\begin{equation}
\mathbb{F}_{n,(1)}(x)=\frac{1}{n} \#\{i,1\leq i \leq n, \ X_i \leq x\}, \ x\in \mathbb{R} \ n\geq 1 \label{empiricalfunction}
\end{equation}

\noindent is the associated sequence of empirical functions. We also have the associated sequence of quantile processes 

\begin{equation}
\mathbb{Q}_{n,(1)}(x)=\sqrt{n} \left( \mathbb{F}^{-1}_{(n),(1)}(s) - F^{-1}(s) \right), \ s\in (0,1), \ n\geq 1 \label{quantileprocesses}
\end{equation}

\noindent where, for $n\geq 1$,

\begin{equation}
\mathbb{F}^{-1}_{n,(1)}(s)=X_{1,n}1_{(0\leq s \leq 1/n)}+\sum_{j=1}^{n}X_{j,n}1_{((i-1)/n\leq s \leq (i/n))}, \ s\in(0,1), \label{quantilefunction}
\end{equation}

\noindent is the associated sequence of quantile processes.\\

\noindent By passing, we recall that $\mathbb{F}^{-1}_{n,(1)}$ is actually the generalized inverse of $\mathbb{F}_{(n),(1)}$ and for the uniform sequence, we have

\begin{equation}
\mathbb{V}_{n,(1)}=\mathbb{U}^{-1}_{n,(1)} \label{invCDFEMP}
\end{equation}

\noindent In virtue of Representation (\ref{repRenyi}), we have the following remarkable relations

\begin{equation}
\mathbb{G}_{n,r,(1)}(x)=\alpha_{n,(1)}(F_{(1)}(x)), \ x\in \mathbb{R} \label{EmpEmpprocess}
\end{equation}

\bigskip \noindent and

\begin{equation}
\mathbb{Q}_{n,(1)}(x)=\sqrt{n}\left( F^{-1}_{(1)}(\mathbb{V}_{n,(1)}(s))- F^{-1}_{(1)}(s)\right) \ s\in (0,1), \ n\geq 1, \label{QQprocess}
\end{equation}

\bigskip \noindent We also have the following relations between the empirical functions and quantile functions

\begin{equation}
\mathbb{F}_{n,(1)}(x)=\mathbb{U}_{n,(1)}(F_{(1)}(x)), \ x\in \mathbb{R} \label{EEFunction}
\end{equation}

\bigskip \noindent and

\begin{equation}
\mathbb{F}^{-1}_{n,(1)}(s)=F^{-1}_{(1)}(\mathbb{V}_{(n),(1)}(s)), \ s\in (0,1), \ n\geq 1. \label{QQFunction}
\end{equation}

\noindent As well, the real and functional empirical processes are related as follows : for $n\geq 1$,

\begin{equation}
\mathbb{G}_{n,r,(1)}(x)=\mathbb{G}_{n,(1)}(f_{x}^{\ast}), \ \alpha_{n,(1)}(s)=\mathbb{G}_{n,(1)}(f_s),  \ s \in (0,1)  \ x \in \mathbb{R}, \label{empiricalprocessRealFonct}
\end{equation}

\bigskip \noindent where for any $x \in \mathbb{R}$, $f_{x}^{\ast}=1_{]-\infty,x]}$ is the indicator function of $]-\infty,x]$ and for $s \in (0,1)$, $f_s=1_{[0,s]}$.\\
 
\bigskip 
\noindent To finish the description, a result of Kiefer-Bahadur (See \cite{bahadur66}) that says that the addition of the sequences of uniform empirical processes and quantiles processes (\ref{empiricalprocessU}) and (\ref{quantileprocessU}) is asymptotically, and uniformly on $[0,1]$, zero in probability, that is

\begin{equation}
\sup_{s\in [0,1]} \left\vert \alpha_{n,(1)}(s)+\gamma_{n,(1)}(s) \right\vert =o_{\mathbb{P}}(1) \text{ as } n\rightarrow +\infty. \label{bahadurRep}
\end{equation}

\noindent This result is a powerful tool to handle the rank statistics when our studied statistics are $L$-statistics.\\

\bigskip \noindent \textbf{Bivariate frame}. As to the bivariate case, we use the Sklar's theorem (See \cite{sklar}). Let us begin to define a copula in $\mathbb{R}^2$ as bivariare probability distribution function $C(u,v)$, $(u,v)\in \mathbb{R}^2$ with support $[0,1]^2$ and with $[0,1]$-uniform margins, that is

$$
C(u,v)=0 \text{ for } (u,v)\in]-\infty,0[\times \mathbb{R} \text{ for } (u,v)\in \mathbb{R} \times ]-\infty,0[.
$$

\noindent Let us denote by $F_{(2)}$ the bivariate distribution function of our random couple $Y=(X^{(1)}, X^{(2)})$ and by $F_{(21)}$ and $F_{(22)}$ its margins, which are the \textit{cdf} of $X^{(1)}$ and $X^{(2)}$ respectively. The Sklar's theorem (\cite{sklar}) says that there exists a copula $C_{(2)}$ such that we have

\begin{equation}
F_{(2)}(x,y)=C_{(2)}(F_{(21)}(x), F_{(22)}(y)), \text{ for any } (x,y)\in \mathbb{R}^2. \label{theoSklar}
\end{equation}

\noindent This copula is unique if the marginal \textit{cdf}'s are continuous. In this paper, we will suppose that the marginal \textit{cdf}'s are continuous and then $C_{(2)}$ is unique andfixed for once. By the Kolmogorov Theorem, there exists a probability space $(\Omega,\mathcal{A},\mathbb{P})$ holding a sequence of independent random couples $(U^{(1)}_i,U^{(2)}_n)$, $n\geq 1$, of common bivariate distribution function $C_{(2)}$. On that space the random couples $(F_{(21)}^{-1}(U^{(1)}_n), \ F_{(22)}^{-1}(U^{(2)}_n))$ are independent and have a common bivariate distribution function equal to 
$C_{(2)}$, since

\begin{eqnarray*}
&&\mathbb{P}(F_{(21)}^{-1}(U^{(1)}_i)\leq x_{1}, \ F_{(22)}^{-1}(U^{(2)}_i)\leq x_{2})\\
&=&\mathbb{P}(U^{(1)}_i \leq F_{(21)}(x_{1}), \  U^{(2)}_i\leq F_{(22)}(x_{2}))\\
&=&C_{(2)}(F_{(21)}(x_{1}), \  F_{(22)}(x_{2}))\\
&=& F_{(2)}(x_{1}, \ x_{2}),
\end{eqnarray*}

\noindent by (\ref{theoSklar}), and where we applied the general formula for generalized inverses functions for a \textit{cdf} : 

$$
F^{-1}(s) \leq y \Leftrightarrow  s \leq F(x), \text{ for } (s,x) \in [0,1]\times \mathbb{R}.
$$
 
\noindent For more on interesting properties of generalized inverses of monotone functions, see \cite{wcia-srv-ang}, Chapter 4.\\

\noindent Based on this remark, we place ourselves on the probability space holding the sequence of independent random couples $(U^{(1)},U^{(2)})$, $(U^{(1)}_n,U^{(2)}_n)$, $n \geq 2$, with common distribution function $C_{(2)}$, and the observations from $Y=(X^{(1)}, X^{(2)})=(F_{(2),1}^{-1}(U^{(1)}),F_{(2),2}^{-1}(U^{(2)}))$, are generated as follows :

\begin{equation} \label{renyidim2}
Y_n=(F_{(21)}^{-1}(U^{(1)}_n),F_{(22)}^{-1}(U^{(2)}_n)), \ n\geq 1.
\end{equation}

\noindent In this setting, we rather use the the bidimensional functional empirical process based
on $\left\{\left((U_{i}^{(1)},U_{i}^{(2)})\right)\right\}_{i=1,\ldots,n}$ and defined by

\begin{equation}
\mathbb{T}_{n,(2)}\left(f\right) = \frac{1}{%
\sqrt{n}}\,\sum_{j=1}^{n}\,\left( f\left((U_{i}^{(1)},U_{(i)}^{2})\right) - \mathbb{P}%
_{\left((U^{(1)},U^{(2)})\right)}\left(f\right)\right), \label{empProcUV}
\end{equation}

\noindent whenever $f$ is a function of $(u,v)\in [0,1]^2$ such that $\mathbb{E}(f(U^{(1)},U^{(2)})^2)$.\\

\noindent For any Donsker class $\mathcal{F}_{(2)}([0,1]^2)$, the stochastic process $\mathbb{T}_{n,(2)}$ converges to a Gaussian process $\mathbb{T}$ with variance-covariance function, for $(f,g) \in \mathcal{F}_{(2)}^{2}([0,1]^2)$, denoted by $\Gamma^\ast\left(f,g\right)$, is given by 
 
\begin{equation}  \label{Gamma}
 \int_{[0,1]^2}\left(f(u,v) - \mathbb{P}_{\left(U^{(1)},U^{(2)}\right)}\left(f\right)\right)\left(g(u,v) - 
\mathbb{P}_{\left(U^{(1)},U^{(2)}\right)}\left(g\right)\right)\,dC(u,v) 
\end{equation}

\noindent where

\begin{equation*}
\mathbb{P}_{\left(U^{(1)},U^{(2)}\right)}\left(f\right)=\mathbb{E}\left(
f\left(U^{(1)},U^{(2)}\right) \right)=\int_{[0,1]^2}\,f(u,v)\,dC(u,v).
\end{equation*}

\noindent Another form of the variance-covariance function $\ref{Gamma}$ is also 

\begin{equation}  \label{gamma2}
\gamma_{(2)}(f,g)= \int_{[0,1]^2}f(s)g(t)\left(C(s,t) - s\,t\right)\,ds\,dt
\end{equation}

\noindent By deciding to use the functional empirical process based on the observations provided by the Copula C, the functional empirical process based on the incomes and defined by

\begin{equation}
\mathbb{G}_{n,(2)}\left(g\right) = \frac{1}{%
\sqrt{n}}\,\sum_{j=1}^{n}\,\left( g\left((X_{j}^{(1)},X_{(j)}^{2})\right) - \mathbb{P}%
_{\left((X^{(1)},X^{(2)})\right)}\left(g\right)\right), \label{empProcXY}
\end{equation}

\noindent for any function $g$, defined on $\mathcal{V}_X^{2}$ such that $\mathbb{E}(g(X^{(1)},X^{(2)})^2)<+\infty$ is not used directly. Instead, by using Representation (\ref{renyidim2}), we have

\begin{equation*} \large
\mathbb{G}_{n,(2)}\left(g\right) = \frac{1}{\sqrt{n}}\,\sum_{j=1}^{n}\,\left( g\left(F^{-1}_{(2),1}(U_{j}^{(1)}),F^{-1}_{(2),1}(U_{j}^{(2)})\right) - 
\mathbb{P}_{\left((U^{1},U^{2})\right)}\left(g((F^{-1}_{(2),1}(.),F^{-1}_{(2),2}(.))\right)\right).
\end{equation*}

\noindent Hence the correspondence between the function $g$ in Formula (\ref{empProcXY}) and $f$ in Formula in (\ref{empProcUV}) is  the following.

\begin{equation}
f(s,t)=g\left(F^{-1}_{(2),1}(s),F^{-1}_{(2),2}(t)\right), \ (s,t) \in [0,1]^2, \label{empProcXYUV}.
\end{equation}

\bigskip \noindent All the needed notation are now complete and will allow the expression of the asymptotic theory we undertake here.\\

\section{The asymptotic behavior of the Gini Index} \label{ABGI} 

\noindent Let $X$ denote the income random variable of one given population with a positive mean $\mu=\mathbb{E}(X)$ and let $\mathcal{V}_X$ denote its support.

\begin{equation}  \label{formgini1}
GI_n = \frac{1}{\mu_n}\left(\frac{1}{n}\sum_{j=1}^{n} \left(\frac{2 j - 1}{n}
-1 \right)X_{j,n}\right).
\end{equation}

\noindent Set

\begin{equation}  \label{formAn1}
A_n = \frac{1}{n}\sum_{j=1}^{n}\frac{j}{n} X_{j,n}.
\end{equation}

\noindent We can write this expression of as

\begin{equation}  \label{formAn2}
A_n = \frac{1}{n}\sum_{j=1}^{n}\mathbb{F}_{n,(1)}(X_j)X_j.
\end{equation}

\noindent Formula (\ref{formgini1}) becomes

\begin{equation}  \label{formgini2}
GI_n = \frac{2 A_n}{\mu_n} - 1 - \frac{1}{n}.
\end{equation}

\bigskip

\noindent Before tackling this study, let us first introduce some notations:

\begin{equation*}
\forall x \in \mathcal{V}_X, h(x) = x F_{(1)}(x),\; I_d(x) = x; 
\end{equation*}
\begin{equation*}
\forall s \in (0,1), \ell(s) = F_{(1)}^{-1}(s), f_{s}(x) = \mathbf{1}_{(0 ,F_{(1)}^{-1}(s))}(x), 
\end{equation*}

\bigskip

\noindent And finally, set for real-valued measurable functions $f$ and $g$ defined on $\mathbb{R}$ 

\begin{equation}  \label{gamma1}
\gamma_1(f,g)=\int_0^1 \int_0^1 f(s) g(t)\left( \min(s,t) - s\,t
\right)\,ds\,dt
\end{equation}

\noindent Now, we have the following theorems for the asymptotic behavior,
the first concerns the statistic $A_n$ and the second concerns that of $GI_n.
$ Let us state first the following lemma of the representation.

\begin{lemma}
\label{lemma1} Define

\begin{equation*}
\beta_{n,(1)}(\ell) = \int_0^1 \ell(s)\mathbb{G}_{n,(1)}(f_s)\,ds. 
\end{equation*}

The, the statistic $A_n$ can be represented as follows 
\begin{equation}
A_n = \mathbb{P}_{n,1}(h) + \frac{1}{\sqrt{n}}\beta_{n,(1)}(\ell) + o_p(n^{-1/2}).
\end{equation}
\end{lemma}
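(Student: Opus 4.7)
The plan is to isolate $\mathbb{P}_{n,(1)}(h)$ directly from (\ref{formAn2}) by inserting and subtracting the population distribution function:
\begin{equation*}
A_n = \mathbb{P}_{n,(1)}(h) + \frac{1}{n}\sum_{j=1}^n \bigl[\mathbb{F}_{n,(1)}(X_j) - F_{(1)}(X_j)\bigr]\,X_j.
\end{equation*}
Using Renyi's representation (\ref{repRenyi}) together with (\ref{EmpEmpprocess}), the bracketed factor equals $n^{-1/2}\alpha_{n,(1)}(U_j)$ while $X_j = \ell(U_j)$, so the residual rewrites as the Stieltjes integral $n^{-1/2}\int_0^1 \alpha_{n,(1)}(s)\,\ell(s)\,d\mathbb{U}_{n,(1)}(s)$.

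The core step is to swap $d\mathbb{U}_{n,(1)}$ for Lebesgue measure $ds$ at the price of $o_p(n^{-1/2})$. Since $\mathbb{U}_{n,(1)}(s) = s + n^{-1/2}\alpha_{n,(1)}(s)$, I would split
\begin{equation*}
\frac{1}{\sqrt n}\int_0^1 \alpha_{n,(1)}(s)\,\ell(s)\,d\mathbb{U}_{n,(1)}(s) = \frac{1}{\sqrt n}\int_0^1 \alpha_{n,(1)}(s)\,\ell(s)\,ds \;+\; \frac{1}{n}\int_0^1 \alpha_{n,(1)}(s)\,\ell(s)\,d\alpha_{n,(1)}(s).
\end{equation*}
By the identity $\mathbb{G}_{n,(1)}(f_s) = \alpha_{n,(1)}(s)$ from (\ref{empiricalprocessRealFonct}), valid under continuity of $F_{(1)}$, the first summand on the right is already exactly $n^{-1/2}\beta_{n,(1)}(\ell)$. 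The proof then reduces to showing that the quadratic integral $\int_0^1 \alpha_{n,(1)}\,\ell\,d\alpha_{n,(1)}$ is $O_p(1)$, for then its contribution is $O_p(n^{-1}) = o_p(n^{-1/2})$.

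To bound that integral I would apply a discrete It\^{o}--Stieltjes formula to $\alpha_{n,(1)}$, which is a pure-jump process with jumps of size $n^{-1/2}$ at each $U_{j,n}$ and vanishing endpoint values $\alpha_{n,(1)}(0) = \alpha_{n,(1)}(1) = 0$. Writing the differential of $\ell\,\alpha_{n,(1)}^2$ and rearranging gives
\begin{equation*}
\int_0^1 \alpha_{n,(1)}(s)\,\ell(s)\,d\alpha_{n,(1)}(s) = -\frac{1}{2}\int_0^1 \alpha_{n,(1)}^2(s)\,d\ell(s) + \frac{\mu_n}{2}.
\end{equation*}
The second term is $O_p(1)$ under $\mathbb{E}(X)<\infty$, and the first is controlled in probability by classical weighted bounds on $\sup_s|\alpha_{n,(1)}(s)|$ together with integrability of $\ell$. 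Assembling the pieces yields $A_n = \mathbb{P}_{n,(1)}(h) + n^{-1/2}\beta_{n,(1)}(\ell) + o_p(n^{-1/2})$, as claimed.

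The main obstacle is handling $\int \alpha_{n,(1)}^2\,d\ell$: for unbounded incomes $\ell = F_{(1)}^{-1}$ blows up as $s \to 1$, so an integrability condition on $X$ (a finite second moment is usually enough) combined with a Chibisov-type weighted modulus bound on $\alpha_{n,(1)}$ is required. An alternative line, more aligned with the Kiefer--Bahadur route (\ref{bahadurRep}) advertised in the introduction, would start instead from $A_n = \int_0^1 s\,\ell(\mathbb{V}_{n,(1)}(s))\,ds + O(1/n)$, expand $\ell$ to first order around $s$, exchange $\gamma_{n,(1)}$ for $-\alpha_{n,(1)}$ via (\ref{bahadurRep}), and finish with one integration by parts that absorbs the resulting $\mathbb{G}_{n,(1)}(h)$ term back into $\mathbb{P}_{n,(1)}(h)$.
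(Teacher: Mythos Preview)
Your argument is correct but follows a genuinely different line from the paper. The paper rewrites the residual as an integral against Lebesgue measure via the \emph{quantile} function,
\[
Re_n=\int_0^1\bigl[\mathbb{U}_{n,(1)}(\mathbb{V}_{n,(1)}(s))-\mathbb{V}_{n,(1)}(s)\bigr]\,F_{(1)}^{-1}(\mathbb{V}_{n,(1)}(s))\,ds,
\]
then uses the Shorack--Wellner bound $\sup_s|\mathbb{U}_{n,(1)}(\mathbb{V}_{n,(1)}(s))-s|\le 1/n$ to drop one piece, replaces $F_{(1)}^{-1}(\mathbb{V}_{n,(1)}(s))$ by $F_{(1)}^{-1}(s)$, and finally invokes the Kiefer--Bahadur relation (\ref{bahadurRep}) to trade $\gamma_{n,(1)}$ for $-\alpha_{n,(1)}$. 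You instead integrate against $d\mathbb{U}_{n,(1)}$, split $d\mathbb{U}_{n,(1)}=ds+n^{-1/2}\,d\alpha_{n,(1)}$, and dispose of the quadratic remainder by an It\^o--Stieltjes integration by parts, never calling on Bahadur in the main argument. The paper's route is shorter once the Bahadur tool is on the table and avoids the delicate weighted bound on $\int_0^1\alpha_{n,(1)}^2\,d\ell$; your route is more self-contained and makes transparent exactly where a moment or tail condition on $X$ enters. Note, incidentally, that under the standing hypothesis of Theorem~\ref{theoA} that $\ell$ is bounded, your problematic integral $\int_0^1\alpha_{n,(1)}^2\,d\ell$ is trivially $O_p(1)$ and the obstacle you flag disappears; your Chibisov-type weighted bound is only needed if one wants to relax that boundedness. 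The alternative you sketch at the end (start from $\int_0^1 s\,\ell(\mathbb{V}_{n,(1)}(s))\,ds$, expand, apply (\ref{bahadurRep})) is closer in spirit to the paper, though the paper does not Taylor-expand $\ell$ but rather manipulates $\mathbb{U}_{n,(1)}\circ\mathbb{V}_{n,(1)}$ directly.
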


\bigskip

\begin{proof}
By decomposing the equation (\ref{formAn2}), we get

\begin{equation*}
A_n = \frac{1}{n}\sum_{j=1}^{n)}F_{(1)}(X_j)X_j + \frac{1}{n}\sum_{j=1}^{n}%
\left(\mathbb{F}_{n,(1)}(X_j) - F_{(1)}(X_j) \right)X_j.
\end{equation*}

\bigskip

\noindent Let us denote the residual term by

\begin{equation*}
Re_n = \frac{1}{n}\sum_{j=1}^{n}\left(\mathbb{F}_{n,(1)}(X_j) - F_{(1)}(X_j) \right)X_j
\end{equation*}

\noindent then


\begin{equation*}
Re_n = \sum_{j=1}^{n} \int_{\frac{j-1}{n}}^{\frac{j}{n}}\left\{
\mathbb{F}_{n,(1)}(\mathbb{F}_{n,(1)}^{-1}(s)) - F_{(1)}(\mathbb{F}_{n,(1)}^{-1}(s))\right\}\,\mathbb{F}_{n,(1)}^{-1}(s) \,ds,
\end{equation*}

\noindent and so

\begin{equation}  \label{Rn}
Re_n = \int_{0}^{1}\left\{ F_{n,(1)}(F_{n,(1)}^{-1}(s)) -F_{(1)}(F_{n,(1)}^{-1}(s))\right\}\,F_{n,(1)}^{-1}(s) \,ds.
\end{equation}

\bigskip

\noindent By using Formulas (\ref{invCDFEMP}), (\ref{EmpEmpprocess}) and (\ref{QQprocess}), we get
\begin{equation*}
\sqrt{n} Re_n = - \int_{0}^{1} \sqrt{n} \left\{\mathbb{U}_{n,(1)}\left( \mathbb{V}_{n,(1)}(s) \right) - \mathbb{V}_{n,(1)}(s)\right\}\,F_{n,(1)}^{-1}
\left(\mathbb{V}_{n,(1)}(s)\right)\, ds 
\end{equation*}

\begin{equation*}
= - \int_{0}^{1} \sqrt{n} \left( s - \mathbb{V}_{n,(1)}(s)\right)F_{(1)}^{-1}\left(\mathbb{V}_{n,(1)}(s)\right)\, ds
\end{equation*}
\begin{equation*}
- \int_{0}^{1} \sqrt{n} \left( \mathbb{U}_{n,(1)}\left(\mathbb{V}_{n,(1)}(s)\right) -s\right)\,G^{-1}\left(\mathbb{V}_{n,(1)}(s)\right)\, ds. 
\end{equation*}

\bigskip

\noindent From Shorack and Wellner \cite{shorackwellner} (page 585), we have

\begin{equation*}
\sup_{0\leq s\leq 1} \left| \mathbb{U}_{n,(1)}\left(\mathbb{V}_{n,(1)}(s) \right)
-s \right| \leq \frac{1}{n}.
\end{equation*}

\noindent Using the notations $\alpha_{n,(1)}$ and $\gamma_{n,(1)}$, we get

\begin{equation*}
\sqrt{n} Re_n = - \int_{0}^{1} \sqrt{n} \left( s - \mathbb{V}_{n,(1)}(s)\right)F_{(1)}^{-1}\left(\mathbb{V}_{n,(1)}(s)\right)\, ds + o_p(1)
\end{equation*}

\begin{equation}
= - \int_{0}^{1}\gamma_{n,(1)}(s)F_{(1)}^{-1}(s)\,ds + o_p(1) 
\end{equation}

\noindent By using the following Bahadur's representation (See Formula \ref{bahadurRep}) and by applying Formula \ref{empiricalprocessRealFonct}, we get

\begin{equation*}
\sqrt{n} Re_n = \int_{0}^{1} \mathbb{G}_{n,(1)}(f_s)\,\ell(s)\,ds + o_p(1), 
\end{equation*}

\noindent then by identification we get

\begin{equation*}
\beta_{n,(1)}(\ell) = \int_{0}^{1} \mathbb{G}_{n,(1)}(f_s)\,\ell(s)\,ds,
\end{equation*}

\noindent which closes the proof.
\end{proof}

\bigskip \noindent Here is the full representation of the asymptotic distribution of Gini's statistic.

\begin{theorem}
\label{theoA} \noindent Let $\mathbb{P}_X(h^2)$ is finite and the function $%
\ell$ is bounded, then when $n$ tends to $\infty,$ $\sqrt{n}\left(A_n - 
\mathbb{P}_X(h) \right) \rightarrow_d\; \mathbb{A}(h) = \mathbb{G}_{(1)}(h) + \beta_{(1)}(\ell)$, with $\mathbb{A}(h) \rightsquigarrow \mathcal{N}%
\left(0,\sigma^2_A\right)$, 

\begin{equation}
\sigma^2_A = \Gamma(h,h) + \Gamma(\beta_{(1)}(\ell),\beta_{(1)}(\ell)) + 2
\Gamma(h,\beta_{(1)}(\ell)), \label{gammaG}
\end{equation}

\noindent with

\begin{equation}
\Gamma(h,h) = \int \left(h(x)-\mathbb{P}_X\left(h\right)
\right)^2\,dF_{(1)}(x), \label{gamma}
\end{equation}

\begin{equation}
\Gamma(\beta_{(1)}(\ell),\beta_{(1)}(\ell)) = \gamma_1(\ell,\ell),  \label{gamman1},
\end{equation}

\noindent where the function $\gamma_1(.,.)$ is defined in Formula (\ref{gamma1}), and 

\begin{equation}
\Gamma(h,\beta_{(1)}(\ell)) = \int_0^1 \ell(s)\left(\int_{\left(x\leq
F_{(1)}^{-1}(s)\right)} h(x)\,dF_{(1)}(x)\right)\,ds - \left(\mathbb{P}_X(h) \right)^2.
\end{equation}
\end{theorem}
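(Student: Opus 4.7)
The plan is to start from Lemma~\ref{lemma1} and identify the two stochastic contributions as evaluations of the functional empirical process. Multiplying the representation by $\sqrt{n}$ and using $\sqrt{n}(\mathbb{P}_{n,(1)}(h)-\mathbb{P}_X(h))=\mathbb{G}_{n,(1)}(h)$ by definition of the fep, one gets
\begin{equation*}
\sqrt{n}\bigl(A_n-\mathbb{P}_X(h)\bigr)=\mathbb{G}_{n,(1)}(h)+\beta_{n,(1)}(\ell)+o_{\mathbb{P}}(1).
\end{equation*}
Next, I would rewrite $\beta_{n,(1)}(\ell)=\int_0^1 \ell(s)\mathbb{G}_{n,(1)}(f_s)\,ds$ as a single fep evaluation via Fubini. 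Since $\ell$ is bounded and $f_s(x)=\mathbf{1}_{(F_{(1)}(x)<s)}$, this yields $\beta_{n,(1)}(\ell)=\mathbb{G}_{n,(1)}(\Phi)$ with $\Phi(x)=\int_{F_{(1)}(x)}^{1}\ell(s)\,ds$, which is bounded. Combined with $\mathbb{P}_X(h^2)<\infty$, the scalar central limit theorem applied to $\mathbb{G}_{n,(1)}(h+\Phi)$ produces weak convergence to the centered Gaussian $\mathbb{A}(h)=\mathbb{G}_{(1)}(h)+\beta_{(1)}(\ell)$.

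To establish the variance decomposition (\ref{gammaG}), I would use bilinearity of the covariance functional $\Gamma$. The term $\Gamma(h,h)$ is immediate from the covariance formula. For $\Gamma(\beta_{(1)}(\ell),\beta_{(1)}(\ell))$, interchanging the double integral with the covariance and using $\mathrm{Cov}(\mathbb{G}_{(1)}(f_s),\mathbb{G}_{(1)}(f_t))=\mathbb{P}_X(f_s f_t)-\mathbb{P}_X(f_s)\mathbb{P}_X(f_t)=\min(s,t)-st$ gives
\begin{equation*}
\mathrm{Var}(\beta_{(1)}(\ell))=\int_0^1\int_0^1 \ell(s)\ell(t)\bigl(\min(s,t)-st\bigr)\,ds\,dt=\gamma_1(\ell,\ell).
\end{equation*}
For the cross-term, linearity yields $\Gamma(h,\beta_{(1)}(\ell))=\int_0^1\ell(s)\bigl[\mathbb{P}_X(h f_s)-\mathbb{P}_X(h)\,s\bigr]\,ds$, whose first piece is exactly $\int_0^1 \ell(s)\int_{x\le F_{(1)}^{-1}(s)} h(x)\,dF_{(1)}(x)\,ds$.

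The remaining step is to identify the subtractive part with $(\mathbb{P}_X(h))^2$: the change of variable $s=F_{(1)}(x)$ (using continuity of $F_{(1)}$ implicit in the setup) gives
\begin{equation*}
\int_0^1 \ell(s)\,s\,ds=\int F_{(1)}^{-1}(F_{(1)}(x))\,F_{(1)}(x)\,dF_{(1)}(x)=\mathbb{P}_X(h),
\end{equation*}
by the definition $h(x)=xF_{(1)}(x)$, so that this piece times $\mathbb{P}_X(h)$ produces $(\mathbb{P}_X(h))^2$ and matches the announced formula. The main obstacle I anticipate is the passage from the sum of the two fep objects $\mathbb{G}_{n,(1)}(h)+\beta_{n,(1)}(\ell)$ to a single Gaussian limit: this relies on rewriting $\beta_{n,(1)}(\ell)$ as an integral against a single bounded index function $\Phi$, for which the hypothesis that $\ell$ is bounded is essential; without it one would need a tightness argument on the class $\{f_s : s\in[0,1]\}$ and a stochastic Fubini theorem. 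Once this reduction is in place, the rest is covariance bookkeeping.
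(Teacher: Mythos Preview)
Your proof is correct and follows essentially the same skeleton as the paper's: start from Lemma~\ref{lemma1}, pass to the Gaussian limit, and compute the three covariance pieces by Fubini using $\mathrm{Cov}(\mathbb{G}_{(1)}(f_s),\mathbb{G}_{(1)}(f_t))=\min(s,t)-st$.

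There is one genuine difference worth noting. The paper simply asserts that $\mathbb{G}_{n,(1)}(h)+\beta_{n,(1)}(\ell)$ converges to the Gaussian limit, relying implicitly on the finite-dimensional convergence of the functional empirical process set up in Section~\ref{sec1}. You instead collapse $\beta_{n,(1)}(\ell)$ to a single evaluation $\mathbb{G}_{n,(1)}(\Phi)$ with $\Phi(x)=\int_{F_{(1)}(x)}^{1}\ell(s)\,ds$ via Fubini at the finite-$n$ stage, and then invoke the one-dimensional CLT for $\mathbb{G}_{n,(1)}(h+\Phi)$. This is more elementary and makes the role of the boundedness hypothesis on $\ell$ transparent: it guarantees $\Phi$ is bounded, hence $h+\Phi\in\mathcal{F}_{(1)}$. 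The paper's route, by contrast, keeps $\beta_{(1)}(\ell)$ as a separate object throughout, which is convenient later when it is coupled with other indices in the same Gaussian field; your reduction would have to be undone for that purpose. You also carry out explicitly the identification $\int_0^1 s\,\ell(s)\,ds=\mathbb{P}_X(h)$ needed to pass from the intermediate form $\int_0^1\ell(s)\bigl(\cdots - s\,\mathbb{P}_X(h)\bigr)\,ds$ to the stated formula with $-(\mathbb{P}_X(h))^2$; the paper's proof stops one step short of this and leaves the cross-term in the intermediate form (\ref{gammahb}).
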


\begin{proof} By using the previous lemma, 
its easy to see that 
\begin{equation*}
\sqrt{n} \left(A_n - \mathbb{P}_X(h) \right) = \mathbb{G}_{n,(1)}(h) + \beta_{n,(1)}(\ell) + o_p(1)
\end{equation*}

\noindent which tends to a centered Gaussian process $\mathbb{A}(h) = \mathbb{G}_{(1)}(h) + \beta_{(1)}(\ell)$.

\bigskip

\noindent Now, let us find the variance of this centered process. We have

\bigskip

\begin{equation*}
\sigma^2_A = \mathbb{E}\left(\left(\mathbb{G}_{(1)}(h) + \beta_{(1)}(\ell)\right)^2\right) 
\end{equation*}

\begin{eqnarray*}
\sigma^2_A &=& \mathbb{E}\left(\mathbb{G}_{(1)}(h)^2 \right) + \mathbb{E}\left(\beta_{(1)}(\ell)^2 \right)+ 2 \mathbb{E}\left(\mathbb{G}_{(1)}(h)\beta_{(1)}(\ell)\right)\\
&\equiv& \Gamma(h,h) + \Gamma(\beta_{(1)}(\ell),\beta_{(1)}(\ell)) + 2\Gamma(h,\beta_{(1)}(\ell)). 
\end{eqnarray*}

\bigskip

\noindent By Equation (\ref{gamma}) of the definition of the covariance
function, we find

\begin{equation*}
\Gamma(h,h) = \int \left(h(x)-\mathbb{P}_X\left(h\right)
\right)^2\,dG(x).
\end{equation*}

\noindent Let us compute now the remaining terms as follows.

\begin{equation*}
\Gamma(\beta_{(1)}(\ell),\beta_{(1)}(\ell)) = \mathbb{E}\left(\int_0^1\int_0^1\ell(s)%
\ell(t) \mathbb{G}_{(1)}(f_s)\mathbb{G}_{(1)}(f_t)\,ds\,dt \right)
\end{equation*}

\noindent which gives, by applying Fubini's Theorem,

\begin{equation*}
\Gamma(\beta_{(1)}(\ell),\beta_{(1)}(\ell)) = \int_0^1\int_0^1 \ell(s)\ell(t)\mathbb{E}%
\left( \mathbb{G}_{(1)}(f_s)\mathbb{G}_{(1)}(f_t)\right)\,ds\,dt.
\end{equation*}

\noindent Since we have  

\begin{equation*}
\mathbb{E}\left( \mathbb{G}_{(1)}(f_s)\mathbb{G}_{(1)}(f_t)\right) = \min\left(s,t
\right) - s\,t,
\end{equation*}

\noindent we get

\begin{equation*}
\Gamma(\beta_{(1)}(\ell),\beta_{(1)}(\ell)) = \int_0^1\int_0^1
\ell(s)\ell(t)\left(\min\left(s,t \right) - s\,t \right)\,ds\,dt.
\end{equation*}

\bigskip

\noindent By Equation (\ref{gamma1}), we have

\begin{equation*}
\Gamma(\beta_{(1)}(\ell),\beta_{(1)}(\ell)) = \gamma_1(\ell,\ell). 
\end{equation*}

\bigskip

\noindent For $\Gamma(h,\beta_{(1)}(\ell)) = \mathbb{E}\left(\mathbb{G}_{(1)}
(h)\beta_{(1)}(\ell)\right),$ we obtain

\begin{equation*}
\Gamma(h,\beta_{(1)}(\ell)) = \int_0^1 \ell(s) m(h,f_s)\,ds
\end{equation*}

\noindent with

\begin{eqnarray*}
m(h,f_s) &=& \mathbb{E}\left(\mathbb{G}_{(1)}(h)\mathbb{G}_{(1)}(f_s)\right)\\
&=&\int_{\left(x\leq F_{(1)}^{-1}(s)\right)} h(x)\,dF_{(1)}(x)- s\,\mathbb{P}_X(h).
\end{eqnarray*}

\noindent Finally, we conclude that

\begin{equation}  \label{gammahb}
\Gamma(h,\beta_{(1)}(\ell)) = \int_0^1 \ell(s)\left(\int_{\left(x\leq
F_{(1)}^{-1}(s)\right)} h(x)\,dF_{(1)}(x)- s\,\mathbb{P}_X(h) \right)\,ds.
\end{equation}

\bigskip

\noindent This completes the proof of Theorem \ref{theoA}.
\end{proof}

\bigskip

\bigskip

\noindent For the last part of this section, let's define the continuous
form of the Gini index as follows 
\begin{equation*}
GI = \frac{2\,A}{\mu} - 1 \, \text{ with } A = \mathbb{P}_X(h).
\end{equation*}

\noindent Then we are able to expose the following Theorem.

\begin{theorem} \label{theoGI} Assume that $\mu \neq 0,$ $\ell$ is bounded, the quantities $%
\mathbb{P}_X(h^2)$ and $\mathbb{P}_X(I_d^2)$ are finite, then the following
assertion holds :

\begin{equation*}
\sqrt{n}\left(GI_n - GI\right) =\frac{2}{\mu}\left(\mathbb{G}_{n,(1)}(h-\frac{A}{\mu}I_d)+\beta_{n,(1)}(\ell)  \right) + o_p(1). 
\end{equation*}

\noindent This quantity tends to a centered Gaussian process with variance $%
\sigma^2_{GI}$ which is giving by

\begin{equation*}
\sigma^2_{GI} = \frac{4}{\mu^2} \left(\sigma^2_A + \frac{A^2}{\mu^2}
\Gamma(I_d,I_d) -\frac{2A}{\mu}\left(\Gamma(h,I_d) + \Gamma(I_d,\beta_{(1)}(\ell))\right) \right) 
\end{equation*}

\noindent where $\sigma^2_A$ is giving in Theorem \ref{theoA}; $\Gamma(I_d,I_d) = \mathbb{E}\left((X - \mu)^2\right)$ is the variance of the
random variable $X$; 

\begin{equation*}
\Gamma(h,I_d) = \int \left(h(x) - \mathbb{P}_X(h)
\right)\left(x - \mu \right)\,dF_{(1)}(x); 
\end{equation*}
\begin{equation*}
\Gamma(I_d,\beta_{(1)}(\ell)) = \int_0^1 \ell(s)\left(\int_{\left(x\leq
F_{(1)}^{-1}(s)\right)} x\,dF_{(1)}(x)\right)\,ds - A\,\mu. 
\end{equation*}
\end{theorem}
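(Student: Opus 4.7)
The plan is to combine the representation of $A_n$ from Lemma~\ref{lemma1} with a delta-method style linearization of the ratio $A_n/\mu_n$, then read off the variance via bilinearity of $\Gamma$.

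First I would write
\begin{equation*}
GI_n - GI \;=\; \frac{2A_n}{\mu_n} - \frac{2A}{\mu} - \frac{1}{n}
\;=\; \frac{2}{\mu}(A_n - A) - \frac{2A}{\mu^2}(\mu_n - \mu) + R_n,
\end{equation*}
where the remainder $R_n$ collects the second-order terms of the expansion of $x/y$ around $(A,\mu)$ plus the deterministic $-1/n$. Since $\mathbb{P}_X(I_d^2)<\infty$, the CLT gives $\sqrt{n}(\mu_n-\mu)=\mathbb{G}_{n,(1)}(I_d)+o_p(1)$, in particular $\mu_n\to\mu>0$ in probability; together with $\sqrt{n}(A_n-A)=O_p(1)$ from Theorem~\ref{theoA}, this yields $\sqrt{n}R_n=o_p(1)$. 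Substituting the representation of $A_n$ from Lemma~\ref{lemma1} and the identity $\sqrt{n}(\mu_n-\mu)=\mathbb{G}_{n,(1)}(I_d)$ gives
\begin{equation*}
\sqrt{n}(GI_n - GI) = \frac{2}{\mu}\Bigl(\mathbb{G}_{n,(1)}(h) + \beta_{n,(1)}(\ell)\Bigr) - \frac{2A}{\mu^2}\mathbb{G}_{n,(1)}(I_d) + o_p(1),
\end{equation*}
which by linearity of the functional empirical process is exactly the asserted representation with integrand $h - (A/\mu)I_d$.

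Next I would pass to the Gaussian limit: the right-hand side converges in distribution to
\begin{equation*}
\frac{2}{\mu}\Bigl(\mathbb{G}_{(1)}(h) - \tfrac{A}{\mu}\mathbb{G}_{(1)}(I_d) + \beta_{(1)}(\ell)\Bigr),
\end{equation*}
a centred Gaussian variable. To compute $\sigma^2_{GI}$ I would set $W=\mathbb{G}_{(1)}(h)+\beta_{(1)}(\ell)$ and $V=\mathbb{G}_{(1)}(I_d)$ and expand
\begin{equation*}
\mathrm{Var}\!\left(W - \tfrac{A}{\mu}V\right) = \mathrm{Var}(W) + \tfrac{A^2}{\mu^2}\mathrm{Var}(V) - \tfrac{2A}{\mu}\mathrm{Cov}(W,V).
\end{equation*}
By Theorem~\ref{theoA}, $\mathrm{Var}(W)=\sigma^2_A$; by definition $\mathrm{Var}(V)=\Gamma(I_d,I_d)$; and bilinearity of $\Gamma$ gives $\mathrm{Cov}(W,V)=\Gamma(h,I_d)+\Gamma(I_d,\beta_{(1)}(\ell))$. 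Multiplying by $4/\mu^2$ produces the stated $\sigma^2_{GI}$. The two covariance terms $\Gamma(h,I_d)$ and $\Gamma(I_d,\beta_{(1)}(\ell))$ follow the same pattern as in the proof of Theorem~\ref{theoA}: $\Gamma(h,I_d)$ is the ordinary covariance of $h(X)$ and $X$, while for $\Gamma(I_d,\beta_{(1)}(\ell))$ one interchanges expectation and integration (Fubini) and uses $\mathbb{E}[\mathbb{G}_{(1)}(I_d)\mathbb{G}_{(1)}(f_s)] = \int_{x\le F_{(1)}^{-1}(s)} x\,dF_{(1)}(x) - s\mu$.

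The main obstacle is controlling $R_n$: one must justify that the second-order remainder in the ratio expansion is $o_p(n^{-1/2})$, which hinges on $\mu_n\to\mu$ in probability together with the joint tightness of $\sqrt{n}(A_n-A,\mu_n-\mu)$. Both hold under the hypotheses $\mathbb{P}_X(h^2)<\infty$, $\mathbb{P}_X(I_d^2)<\infty$ and $\ell$ bounded, so this step is routine once stated carefully; after that, every remaining computation is an application of the bilinear $\Gamma$ and Fubini's theorem exactly as in Theorem~\ref{theoA}.
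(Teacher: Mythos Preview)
Your proposal is correct and follows essentially the same route as the paper: linearize the ratio $A_n/\mu_n$ around $(A,\mu)$, plug in the representation of $A_n$ from Lemma~\ref{lemma1} and $\sqrt{n}(\mu_n-\mu)=\mathbb{G}_{n,(1)}(I_d)$, then expand the variance by bilinearity of $\Gamma$ and compute $\Gamma(I_d,\beta_{(1)}(\ell))$ via Fubini exactly as for $\Gamma(h,\beta_{(1)}(\ell))$. The only cosmetic difference is that the paper uses the exact algebraic identity $\frac{A_n}{\mu_n}-\frac{A}{\mu}=\frac{A_n-A}{\mu_n}-\frac{A}{\mu\mu_n}(\mu_n-\mu)$ and then replaces $\mu_n$ by $\mu$ in the denominators, whereas you phrase it as a Taylor expansion with explicit remainder $R_n$; the justification in both cases rests on the same consistency and tightness facts you identify.
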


\bigskip

\begin{proof}
\begin{equation*}
\sqrt{n}\left(GI_n - GI\right) = 2\left( \frac{\sqrt{n}\left(A_n - A\right)}{%
\mu_n} - \frac{A}{\mu\mu_n}\sqrt{n}\left(\mu_n - \mu\right)\right) - \frac{1%
}{\sqrt{n}} 
\end{equation*}
\begin{equation*}
=\frac{2}{\mu}\left(\mathbb{G}_{n,(1)}(h) +\beta_n(\ell) - \frac{A}{\mu}\mathbb{G}_{n,(1)}(I_d) \right) + o_p(1) 
\end{equation*}

\noindent which tends to a centered Gaussian process $\frac{2}{\mu} \left( 
\mathbb{A} - \frac{A}{\mu}\mathbb{G}_{(1)}(I_d)\right)$. Compute now the
expression of the variance. We get

\begin{equation*}
\sigma^2_{GI} = \mathbb{E}\left(\left(\frac{2}{\mu} \left( \mathbb{A} - 
\frac{A}{\mu}\mathbb{G}_{(1)}(I_d)\right)\right)^2 \right) 
\end{equation*}
\begin{equation*}
= \frac{4}{\mu^2} \left(\sigma^2_A + \frac{A^2}{\mu^2} \Gamma(I_d,I_d) -%
\frac{2A}{\mu}\left(\Gamma(h,I_d) + \Gamma(I_d,
\beta_{(1)}(\ell)
)\right) \right). 
\end{equation*}

\bigskip

\noindent Applying the equation (\ref{gamma}) to the functions $h$ and $I_d,$
we obtain the expression of $\Gamma(h,h),$ $\Gamma(I_d,I_d)$ and $%
\Gamma(h,I_d).$ And by replacing the function $h$ by $I_d$ in equation \ref%
{gammahb} we obtain 
\begin{equation*}
\Gamma(I_d, \beta_{(1)}(\ell)) = \int_0^1 \ell(s)\left(\int_{\left(x\leq
F_{(1)}^{-1}(s)\right)} I_d(x)\,dF_{(1)}(x)- s\,\mathbb{P}_X(I_d) \right)\,ds ,
\end{equation*}

\noindent but, remember that $I_d$ is the identity function and

\begin{equation*}
\mathbb{P}_X(I_d) \int_0^1 s\,\ell(s)\,ds = \mu\,A 
\end{equation*}

\noindent then

\begin{equation*}
\Gamma(I_d,\beta_{\ell}) = \int_0^1 \ell(s)\left(\int_{\left(x\leq
F_{(1)}^{-1}(s)\right)} x\,dF_{(1)}(x)\right)\,ds - A\,\mu. 
\end{equation*}

\noindent This completes the proof of this part.
\end{proof}

\bigskip Let us move to the variation of the Gini's statistics.


\section{Variation of the Gini Index between two dates} \label{VGI}

We fully use the setting described in Subsection \ref{subsec12} regarding the two phase approach. We need to adapt the notation and the results found in Theorems \ref{theoA} and \ref{theoGI} to follow the
consequences of the moving from  Formula (\ref{empProcXY}) to Formula (\ref{empProcUV}) through Formula (\ref{empProcXYUV}). Accordingly, define $\forall$ $(u,v)\, \in\, [0,1]^2$ and $%
\forall\, j\,=\,1,\,2:$

\begin{equation*}
\ell_j(u) = F^{-1}_{(2),j}(s),\; h_j(u) = u\, \ell_j(u); \; L_j(u) = \frac{2}{%
\mu(j)} \ell_j(u); 
\end{equation*}
\begin{equation*}
\tilde{f}_j(u,v) = \ell_j \circ \pi_j(u,v), \; f_{j,h}(u,v) = h_j \circ
\pi_j(u,v), 
\end{equation*}
\begin{equation*}
f_{j,s}(u,v) = \pi_i\left(\mathbf{1}_{(0\leq s)}(u),\mathbf{1}_{(0\leq
s)}(v)\right) 
\end{equation*}

\noindent

\noindent where $\pi_j$ is the $j^{th}$ projection;

\begin{equation*}
F^\ast_{j,h}(u,v) = \frac{2}{\mu(j)} f_{j,h}(u,v); 
\end{equation*}
\begin{equation*}
F^\ast(u,v) = F^\ast_{2,h}(u,v) - F^\ast_{1,h}(u,v); 
\end{equation*}
\begin{equation*}
\tilde{F}^\ast(u,v) = 2\left(\frac{A(2)}{\mu(2)^2}\tilde{f}_2(u,v) - \frac{%
A(1)}{\mu(1)^2}\tilde{f}_1(u,v)\right). 
\end{equation*}

\noindent Let 
\begin{equation*}
\beta^\ast_{n,(2)}(L) = \int_[0,1] \left(L_2(s) \mathbb{G}_{n,(2)}(f_{2,s}) - L_1(s) 
\mathbb{G}_{n,(2)}(f_{1,s}) \right)\,ds
\end{equation*}

\noindent be the bidimensional residual process.\\

\noindent We can now expose our main theorem which concerns the variation of
the Gini Index.

\bigskip

\begin{theorem} \label{theodeltaGI} Assume that, for all $j = 1,2,$ $\mu(j)$ is finite and
not null; $L_j$ is bounded; the functions $f_{j,s},$ $f_{j,h},$ $\tilde{f}%
_{j},$ $F^\ast$ and $\tilde{F}^\ast$ are square integrable, then we have
the following convergence in distribution as $n$ tends to infinity: 
\begin{equation*}
\sqrt{n}\left( \Delta GI_n(1,2) - \Delta GI(1,2) \right) \rightarrow_d 
\mathcal{G}_{\Delta GI} \rightsquigarrow \mathcal{N}\left(0,\sigma^2_{\Delta
GI} \right)
\end{equation*}

\noindent with

\begin{equation*}
\sigma^2_{\Delta GI} = \Gamma^\ast\left(F^\ast,F^\ast \right) +
\Gamma^\ast\left(\tilde{F}^\ast,\tilde{F}^\ast \right) +
\Gamma^\ast\left(\beta^\ast_L,\beta^\ast_L\right) 
\end{equation*}
\begin{equation*}
-2 \left( \Gamma^\ast\left(F^\ast,\tilde{F}^\ast \right) +
\Gamma^\ast\left(\tilde{F}^\ast,\beta^\ast_L \right) -
\Gamma^\ast\left(F^\ast,\beta^\ast_L \right) \right) 
\end{equation*}

\noindent where

\begin{equation*}
\Gamma^\ast\left( \beta^\ast_L,\beta^\ast_L\right) = \gamma_1(L_1,L_1) +
\gamma_1(L_2,L_2) - 2 \gamma_2(L_1,L_2); 
\end{equation*}

\begin{equation*}
\Gamma^\ast\left(F^\ast,\beta^\ast_L \right) = 
\int_{[0,1]}\left\{L_2(s)\int_{[0,1]\times(0,s)} F^{\ast}(u,v) dC(u,v) \right\}\,ds
\end{equation*}
\begin{equation*}
- \int_{[0,1]}\left\{ \L _1(s)\int_{(0,s)\times [0,1]} F^{\ast}(u,v) dC(u,v)
\right\} ds 
\end{equation*}
\begin{equation*}
-2\, \left(\frac{A(2)}{\mu(2)} - \frac{A(1)}{\mu(1)} \right) \, \mathbb{P}%
_{(U^{(1)},U^{(2)})}\left(F^{\ast}\right)
\end{equation*}

\noindent $\Gamma^\ast\left(\tilde{F}^\ast,\beta^\ast_L \right)$ is
obtained by replacing ${F}^\ast$ by $\tilde{F}^\ast$ in the previous
expression. And we get the covariances of $\Gamma^\ast\left(F^\ast,F^\ast
\right)$, $\Gamma^\ast\left(\tilde{F}^\ast,\tilde{F}^\ast \right)$ and $%
\Gamma^\ast\left(F^\ast,\tilde{F}^\ast \right)$ by the equation (\ref%
{Gamma}).

\bigskip

\begin{proof}
We get 
\begin{equation*}
\sqrt{n}\left( \Delta GI_n(1,2) - \Delta GI(1,2) \right)
\end{equation*}
\begin{equation*}
= \frac{2}{\mu(2)}\left( \mathbb{T}_{n,(2)}\left(f_{2,h}\right) +
\beta^{\ast}_{n,(2)}\left(\ell_2\right) - \frac{A(2)}{\mu(2)}\mathbb{T}_{n,(2)}\left(%
\tilde{f}_2\right)\right) 
\end{equation*}

\begin{equation*}
-\frac{2}{\mu(1)}\left( \mathbb{T}_{n,(2)}\left(f_{1,h}\right) +
\beta^{\ast}_{n,(2)}\left(\ell_1\right) - \frac{A(1)}{\mu(1)}\mathbb{T}_{n,(2)}\left(%
\tilde{f}_1\right)\right) + o_p(1) 
\end{equation*}

\begin{equation*}
= 2\mathbb{T}_{n,(2)}\left(\frac{f_{2,h}}{\mu(2)}-\frac{f_{1,h}}{\mu(1)} \right) +
2\beta^{\ast}_{n,(2)}\left( \frac{\ell_2}{\mu(2)}-\frac{\ell_1}{\mu(1)}\right)
\end{equation*}
\begin{equation*}
-2\mathbb{T}_{n,(2)}\left( \frac{A(2)}{\mu^2(2)}\tilde{f}_2-\frac{A(1)}{\mu^2(1)}%
\tilde{f}_1 \right) + o_p(1).
\end{equation*}

\bigskip

\noindent We find the next expression

\begin{equation*}
\sqrt{n}\left( \Delta GI_n(1,2) - \Delta GI(1,2) \right) = \mathbb{T}_{n,(2)}\left(F^{\ast}\right) + \beta^{\ast}_{n,(2)}\left(L\right)
\end{equation*}
\begin{equation*}
- \mathbb{T}_{n,(2)}\left(\tilde{F}^{\ast}\right) + o_p(1)
\end{equation*}

\begin{equation*}
\rightarrow_d \mathcal{T}_{\Delta GI} = \mathbb{T}_{(2)}\left(F^{\ast}\right) +
\beta^\ast\left(L\right) - \mathbb{T}_{(2)}\left(\tilde{F}^{\ast}\right)
\rightsquigarrow \mathcal{N}\left(0,\sigma^2_{\Delta GI} \right). 
\end{equation*}

\bigskip

\noindent Now, we search the expression of the variance.

\begin{equation*}
\sigma^2_{\Delta GI} = \mathbb{E}\left(\left(\mathbb{T}_{(2)}(F^\ast) +
\beta^\ast(L)\right)^2 +\mathbb{T}_{(2)}\left(\tilde{F}^{\ast}\right)^2\right) 
\end{equation*}
\begin{equation*}
-2 \mathbb{E}\left( \left(\mathbb{T}_{(2)}(F^\ast) + \beta^\ast(L)\right) 
\mathbb{T}_{(2)}\left(\tilde{F}^{\ast}\right)\right) 
\end{equation*}
\begin{equation*}
= \mathbb{E}\left(\mathbb{T}_{(2)}(F^\ast)^2\right) + \mathbb{E}%
\left(\beta^\ast(L)^2\right) + 2 \mathbb{E}\left(\mathbb{T}_{(2)}
(F^\ast)\beta^\ast(L)\right)
\end{equation*}
\begin{equation*}
+ \mathbb{E}\left(\mathbb{T}_{(2)}(\tilde{F}^\ast)^2\right) - 2 \mathbb{E}\left(%
\mathbb{T}_{(2)}(F^\ast) \mathbb{T}_{(2)}(\tilde{F}^\ast) \right) - 2 \mathbb{E}\left(%
\mathbb{T}_{(2)}(\tilde{F}^\ast) \beta^\ast(L)\right).
\end{equation*}

\bigskip

\noindent Then

\begin{equation*}
\sigma^2_{\Delta GI} = \Gamma^\ast\left(F^\ast,F^\ast \right) +
\Gamma^\ast\left(\tilde{F}^\ast,\tilde{F}^\ast \right) +
\Gamma^\ast\left(\beta^\ast_L,\beta^\ast_L\right) 
\end{equation*}
\begin{equation*}
-2 \left( \Gamma^\ast\left(F^\ast,\tilde{F}^\ast \right) +
\Gamma^\ast\left(\tilde{F}^\ast,\beta^\ast_L \right) -
\Gamma^\ast\left(F^\ast,\beta^\ast_L \right) \right). 
\end{equation*}

\bigskip

\noindent And we get the expressions of the covariances of $%
\Gamma^\ast\left(F^\ast,F^\ast \right)$, $\Gamma^\ast\left(\tilde{F}%
^\ast,\tilde{F}^\ast \right)$ and $\Gamma^\ast\left(F^\ast,\tilde{F}%
^\ast \right)$ by the equation (\ref{Gamma}). For the tree rest, let's find
them.

\bigskip

\noindent Firstly, compute $\Gamma^\ast\left(
\beta^\ast_L,\beta^\ast_L\right).$

\begin{equation*}
\Gamma^\ast\left( \beta^\ast_L,\beta^\ast_L\right) = \mathbb{E}%
\left(\beta^\ast(L)^2\right) 
\end{equation*}
\begin{equation*}
= \mathbb{E}\left(\int_D \left(L_2(s)\mathbb{T}_{(2)}(f_{2,s}) - L_1(s)\mathbb{T}_{(2)}(f_{1,s}) \right)\,ds\right)^2 
\end{equation*}
\begin{equation*}
=\mathbb{E}\left(\int_{[0,1]^2} \left(L_2(s)\mathbb{T}_{(2)}(f_{2,s}) - L_1(s)\mathbb{G%
}(f_{1,s}) \right) \left(L_2(t)\mathbb{T}_{(2)}(f_{2,t}) - L_1(t)\mathbb{T}_{(2)}
(f_{1,t})\right)\,ds\,dt\right) 
\end{equation*}
\begin{equation*}
= \int_{[0,1]^2}L_2(s)L_2(t)\Gamma^\ast\left(f_{2,s},f_{2,t}\right)\,ds\,dt +
\int_{[0,1]^2}L_1(s)L_1(t)\Gamma^\ast\left(f_{1,s},f_{1,t}\right)\,ds\,dt 
\end{equation*}
\begin{equation*}
- \int_{[0,1]^2}L_1(t)L_2(s)\Gamma^\ast\left(f_{1,t},f_{2,s}\right)\,ds\,dt -
\int_{[0,1]^2}L_1(s)L_2(t)\Gamma^\ast\left(f_{1,s},f_{2,t}\right)\,ds\,dt. 
\end{equation*}

\bigskip

\noindent But, we have

\begin{equation*}
\int_{[0,1]^2}L_1(t)L_2(s)\Gamma^\ast\left(f_{1,t},f_{2,s}\right)\,ds\,dt =
\int_{[0,1]^2}L_1(s)L_2(t)\Gamma^\ast\left(f_{1,s},f_{2,t}\right)\,ds\,dt
\end{equation*}

\bigskip

\noindent then

\begin{equation*}
\Gamma^\ast\left( \beta^\ast_L,\beta^\ast_L\right) =
\int_{[0,1]^2}L_1(s)L_1(t)\Gamma^\ast\left(f_{1,s},f_{1,t}\right)\,ds\,dt
\end{equation*}
\begin{equation*}
+ \int_{[0,1]^2}L_2(s)L_2(t)\Gamma^\ast\left(f_{2,s},f_{2,t}\right)\,ds\,dt 
\end{equation*}
\begin{equation*}
- 2 \int_{[0,1]^2}L_1(t)L_2(s)\Gamma^\ast\left(f_{1,t},f_{2,s}\right)\,ds\,dt 
\end{equation*}
\begin{equation*}
\equiv \gamma_1(L_1,L_1) + \gamma_1(L_2,L_2) - 2 \gamma_2(L_1,L_2). 
\end{equation*}

\bigskip

\noindent But

\begin{equation*}
\Gamma^\ast\left(f_{1,s},f_{1,t}\right) = \mathbb{E}\left(\mathbf{1}%
_{(0,s)}\left(U\right)\mathbf{1}_{(0,t)}\left(U\right) \right) - st =
\min(s,t) - s\,t, 
\end{equation*}
\begin{equation*}
\Gamma^\ast\left(f_{2,s},f_{2,t}\right) = \mathbb{E}\left(\mathbf{1}%
_{(0,s)}\left(U^{(2)}\right)\mathbf{1}_{(0,t)}\left(U^{(2)}\right) \right) - st =
\min(s,t) - s\,t, 
\end{equation*}

\bigskip

\noindent and

\begin{equation*}
\Gamma^\ast\left(f_{1,s},f_{2,t}\right) = \mathbb{E}\left(\mathbf{1}%
_{(0,s)}\left(U\right)\mathbf{1}_{(0,t)}\left(U^{(2)}\right) \right) - st = C(s,t)
- s t. 
\end{equation*}

\bigskip

\noindent Then by identification we get

\begin{equation*}
\gamma_1(L_1,L_1)= \int_{[0,1]^2}L_1(s)L_1(t)\left(\min(s,t) -
s\,t\right)\,ds\,dt; 
\end{equation*}

\begin{equation*}
\gamma_1(L_2,L_2)= \int_{[0,1]^2}L_2(s)L_2(t)\left(\min(s,t) -
s\,t\right)\,ds\,dt; 
\end{equation*}

\begin{equation*}
\gamma_2(L_1,L_2)= \int_{[0,1]^2}L_1(s)L_2(t)\left(C(s,t) - s\,t\right)\,ds\,dt. 
\end{equation*}

\bigskip

\noindent Secondly, let's find the expression of $\Gamma^\ast\left(F^\ast,%
\beta^\ast_L \right).$

\begin{equation*}
\Gamma^\ast\left(F^\ast,\beta^\ast_L \right) = \mathbb{E}\left(\mathbb{T}_{(2)}(F^\ast)\beta^\ast_L \right) 
\end{equation*}
\begin{equation*}
= \int_D L_2(s)\Gamma^\ast\left(F^\ast,f_{2,s} \right)\,ds - \int_D
L_1(s)\Gamma^\ast\left(F^\ast,f_{1,s} \right)\,ds 
\end{equation*}

\noindent or

\begin{equation*}
\Gamma^\ast\left(F^\ast,f_{2,s} \right) = \mathbb{E}\left(F^\ast(U^{(1)},U^{(2)})
f_{2,s}(U^{(1)},U^{(2)}) \right) - \mathbb{P}_{(U^{(1)},U^{(2)})}\left(F^\ast\right)\mathbb{P}%
_{(U^{(1)},U^{(2)})}(f_{2,s}) 
\end{equation*}

\begin{equation*}
\mathbb{P}_{(U^{(1)},U^{(2)})}(f_{2,s}) = \int_{[0,1]^2}f_{2,s}(u,v)\,dC(u,v) 
\end{equation*}
\begin{equation*}
= \int_{[0,1]^2} \mathbf{1}_{(0,s)}(v)\,dC(u,v) = C(1,s) = s. 
\end{equation*}

\begin{equation*}
\mathbb{E}\left(F^\ast(U^{(1)},U^{(2)}) f_{2,s}(U^{(1)},U^{(2)} \right) = \mathbb{E}%
\left(F^\ast(U^{(1)},U^{(2)})\mathbf{1}_{(0,s)}(U^{(2)}) \right) 
\end{equation*}
\begin{equation*}
=\int_{[0,1]\times(0,s)} F^\ast(u,v)\,dC(u,v), 
\end{equation*}

\bigskip

\noindent and so, we arrive at

\begin{equation*}
\Gamma^\ast\left(F^\ast,f_{2,s} \right) = \int_{[0,1]\times(0,s)}
F^\ast(u,v)\,dC(u,v) - s\,\mathbb{P}_{(U^{(1)},U^{(2)})}\left(F^\ast\right). 
\end{equation*}

\bigskip

\noindent Similarly, we get

\begin{equation*}
\Gamma^\ast\left(F^\ast,f_{1,s} \right) = \int_{(0,s)\times [0,1]}
F^\ast(u,v)\,dC(u,v) - s\,\mathbb{P}_{(U^{(1)},U^{(2)})}\left(F^\ast\right). 
\end{equation*}

\bigskip

\noindent Therefore, we have

\begin{equation*}
\Gamma^\ast\left(F^\ast,\beta^\ast_L \right) = 
\int_{[0,1]}\left\{L_2(s)\int_{[0,1]\times(0,s)} F^{\ast}(u,v) dC(u,v)\right\}\,ds 
\end{equation*}
\begin{equation*}
- \int_{[0,1]}\left\{\L _1(s)\int_{(0,s)\times [0,1]} F^{\ast}(u,v) dC(u,v)
\right\} ds 
\end{equation*}
\begin{equation*}
-\mathbb{P}_{(U^{(1)},U^{(2)})}\left(F^{\ast}\right)\int_{[0,1]} s(L_2(s)-L_1(s))ds, 
\end{equation*}

\bigskip

\noindent and

\begin{equation*}
\int_{[0,1]} s(L_2(s)-L_1(s))ds = 2\, \int_D \left(\frac{s\,\ell_2(s)}{\mu(2)} - 
\frac{s\,\ell_1(s)}{\mu(1)} \right)
\end{equation*}
\begin{equation*}
=2 \left(\frac{A(2)}{\mu(2)} - \frac{A(1)}{\mu(1)} \right). 
\end{equation*}

\bigskip

\noindent Finally we get the expression of $\Gamma^\ast\left(\tilde{F}%
^\ast,\beta^\ast_L \right)$ by the same way. This achieves the proof of
this theorem.
\end{proof}
\end{theorem}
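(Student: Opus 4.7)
The plan is to lift the single-date representation from Theorem \ref{theoGI} to the bivariate frame and then to subtract, exploiting the linearity of the functional empirical process. First, I would apply Theorem \ref{theoGI} at each date $j\in\{1,2\}$ to get
\begin{equation*}
\sqrt{n}(GI_n(j)-GI(j))=\frac{2}{\mu(j)}\Bigl(\mathbb{G}_{n,(1)}(h_j)+\beta_{n,(1)}(\ell_j)-\frac{A(j)}{\mu(j)}\mathbb{G}_{n,(1)}(I_d)\Bigr)+o_{\mathbb{P}}(1).
\end{equation*}
Next, using the Renyi representation (\ref{renyidim2}) and the correspondence (\ref{empProcXYUV}) between a function $g$ on $\mathcal{V}_X^2$ and a function $f$ on $[0,1]^2$, I would rewrite each univariate $\mathbb{G}_{n,(1)}$ at date $j$ as a bivariate $\mathbb{T}_{n,(2)}$ applied to the pullbacks $f_{j,h}=h_j\circ\pi_j$ and $\tilde f_j=\ell_j\circ\pi_j$, and similarly convert the integrand of $\beta_{n,(1)}(\ell_j)$ into $\mathbb{T}_{n,(2)}(f_{j,s})$. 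This is just a change of variables, since both processes are built on the same probability space via (\ref{renyidim2}), and it turns $\beta_{n,(1)}$ into the one-sided pieces of $\beta^{\ast}_{n,(2)}(L)$.

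Subtracting the $j=1$ from the $j=2$ representation and rearranging the three types of terms yields
\begin{equation*}
\sqrt{n}(\Delta GI_n(1,2)-\Delta GI(1,2))=\mathbb{T}_{n,(2)}(F^{\ast})+\beta^{\ast}_{n,(2)}(L)-\mathbb{T}_{n,(2)}(\tilde F^{\ast})+o_{\mathbb{P}}(1),
\end{equation*}
where $F^{\ast}=F^{\ast}_{2,h}-F^{\ast}_{1,h}$ and $\tilde F^{\ast}$ come precisely from the definitions given before the statement. The square-integrability hypotheses on $f_{j,h}$, $\tilde f_j$, $F^{\ast}$, $\tilde F^{\ast}$ together with the boundedness of $L_j$ put each summand inside a Donsker class of the bivariate functional empirical process, so the finite-dimensional CLT applied in the spirit of Section \ref{subsec12} gives convergence in distribution to the centered Gaussian limit $\mathcal{T}_{\Delta GI}=\mathbb{T}_{(2)}(F^{\ast})+\beta^{\ast}(L)-\mathbb{T}_{(2)}(\tilde F^{\ast})$.

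To get the variance I would expand $\mathbb{E}(\mathcal{T}_{\Delta GI}^{2})$ by bilinearity into the six covariance terms listed in the statement. The three pure $\mathbb{T}_{(2)}$--$\mathbb{T}_{(2)}$ covariances are read off from (\ref{Gamma}) directly. The remaining three covariances involve $\beta^{\ast}(L)$; to compute them I would push the expectation inside the $ds$-integral defining $\beta^{\ast}(L)$ via Fubini, so each reduces to integrating a kernel $\Gamma^{\ast}(\cdot,f_{j,s})$ against $L_j(s)$. For $\Gamma^{\ast}(\beta^{\ast}_L,\beta^{\ast}_L)$ this produces a double $ds\,dt$-integral whose kernel is $\min(s,t)-st$ on the diagonal pieces and $C(s,t)-st$ on the cross piece, matching the decomposition $\gamma_1(L_1,L_1)+\gamma_1(L_2,L_2)-2\gamma_2(L_1,L_2)$. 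For $\Gamma^{\ast}(F^{\ast},\beta^{\ast}_L)$ (and its tilde analog) I would compute $\mathbb{P}_{(U^{(1)},U^{(2)})}(f_{j,s})=s$ and identify $\mathbb{E}(F^{\ast}(U^{(1)},U^{(2)})\,\mathbf{1}_{(0,s)}(U^{(j)}))$ as the half-strip integral of $F^{\ast}$ against $dC$, and finally use $\int_{[0,1]} s(L_2-L_1)(s)\,ds=2(A(2)/\mu(2)-A(1)/\mu(1))$ to get the stated closed form.

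The only genuine obstacle is keeping the bookkeeping of the cross-covariances straight, in particular handling the interchange $(s,t)\leftrightarrow(t,s)$ in $\Gamma^{\ast}(\beta^{\ast}_L,\beta^{\ast}_L)$ that collapses two cross terms into one, and making sure that the sign convention in (\ref{Gamma}) is applied consistently when $F^{\ast}$ is replaced by $\tilde F^{\ast}$; everything else is routine once the representation displayed above is secured, with $\tilde{F}^{\ast}$-type covariances being obtained verbatim by the same argument with $F^{\ast}$ replaced by $\tilde F^{\ast}$.
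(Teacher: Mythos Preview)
Your proposal is correct and follows essentially the same route as the paper: lift the single-date representation of Theorem \ref{theoGI} to the bivariate process $\mathbb{T}_{n,(2)}$, subtract to obtain $\mathbb{T}_{n,(2)}(F^{\ast})+\beta^{\ast}_{n,(2)}(L)-\mathbb{T}_{n,(2)}(\tilde F^{\ast})+o_{\mathbb{P}}(1)$, and then expand the limiting variance into six covariance terms computed exactly as you outline (Fubini for the $\beta^{\ast}$ pieces, the $\min(s,t)-st$ and $C(s,t)-st$ kernels, the identity $\mathbb{P}_{(U^{(1)},U^{(2)})}(f_{j,s})=s$, and the evaluation of $\int s(L_2-L_1)(s)\,ds$). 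Even the bookkeeping point you flag --- collapsing the two cross terms in $\Gamma^{\ast}(\beta^{\ast}_L,\beta^{\ast}_L)$ via the $(s,t)\leftrightarrow(t,s)$ swap --- is handled identically in the paper.
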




\section{Mutual influence with the GPI}

\label{MI}

\subsection{Remaind on the GPI}

We consider a class of poverty measures called the Generalized Poverty Index
(GPI) introduced by Lo \textit{and al.} \cite{loGPI} as an attempt to gather
a large class of poverty measures reviewed in Zheng \cite{zheng}. This class
includes the most popular indices such as those of Sen (\cite{sen}), Kakwani
(\cite{kakwani}), Shorrocks (\cite{shorrocks}), Clark-Hemming-Ulph (\cite%
{chu}), Foster-Greer-Thorbecke (\cite{fgt}), etc. See Lo (\cite{loGPI2013})
for a review of the GPI.

\bigskip

\noindent For the variation of the GPI, we need the functions $g_i$ and $%
\nu_i$ provided by the theorem of the representation of the GPI \cite{losall}%
. Put accordingly with these functions:

\begin{equation*}
g_i(x) = c\left(F_{(2),i}(x)\right)q_i(x)\;\mathrm{\ and }\;\nu_i(s)
=c^{\prime}(s)q_i\left(F_{(2),i}^{-1}(s)\right).
\end{equation*}

\bigskip

\noindent We define for all $(u,v)\in [0,1]^2$

\begin{equation*}
f_{i,s}(u,v) = \pi_i\left( \mathbf{1}_{(o,s)}(u),\mathbf{1}%
_{(o,s)}(v)\right),
\end{equation*}
\begin{equation*}
F^{\ast}_{i,J}(u,v)=g_i\circ \tilde{f}_i(u,v)= g_i\circ F_{(2),i}^{-1}\circ
\pi_i(u,v),
\end{equation*}
\noindent and 
\begin{equation*}
F^{\ast}_{J}(u,v) = F^{\ast}_{2,J}(u,v) - F^{\ast}_{1,J}(u,v).
\end{equation*}

\noindent And denote the residual term for the \textit{GPI} by

\begin{equation*}
\beta^{\ast}_{(2)}(\nu) = \int_{[0,1]} \left(\mathbb{T}_{(2)}\left(f_{2,s}\right)\nu_2(s) - 
\mathbb{T}_{(2)}\left(f_{1,s}\right)\nu_1(s)\right)ds.
\end{equation*}

\bigskip

\begin{theorem}
\label{theodeltaGPI} Let $\mu(i)$ finite for $i=1,2$. Suppose that $\mathbb{P%
}_{\left(U^{(1)},U^{(2)}\right)}\left(\left(f_{1,s}\right)^2\right),$ $\mathbb{P}%
_{\left(U^{(1)},U^{(2)}\right)}\left(\left(f_{2,s}\right)^2\right)$ and $\mathbb{P}%
_{\left(U^{(1)},U^{(2)}\right)}\left({F^{\ast}_{J}}^2\right)$ are finite.\\

\noindent Then $\sqrt{n}\left(\Delta J_n(1,2) - \Delta J(1,2)\right)$ converges to $%
\mathcal{G}_{\Delta GPI} = \mathbb{T}_{(2)}\left(F^{\ast}_{J}\right) +
\beta^\ast(\nu)$

\noindent which is a centered Gaussian process of variance-covariance
function:

\begin{equation*}
\sigma^2_{\Delta GPI} = \Gamma^\ast\left(F^\ast_J,F^\ast_J\right) +
\Gamma^\ast\left(\beta^\ast_{\nu},\beta^\ast_{\nu} \right) + 2\,
\Gamma^\ast\left(F^\ast,\beta^\ast_{\nu} \right)
\end{equation*}

\noindent where

\begin{equation*}
\Gamma^\ast\left(F^\ast_J,F^\ast_J\right) =
\int_{[0,1]^2}\left(F^{\ast}_{J}(u,v) - \mathbb{P}_{\left(U^{(1)},U^{(2)}\right)}\left(F^{%
\ast}_{J}\right)\right)^2\,dC(u,v);
\end{equation*}

\begin{equation*}
\Gamma^\ast\left(\beta^\ast_{\nu},\beta^\ast_{\nu} \right) =
\gamma_1(\nu_1,\nu_1) - 2\,\gamma_2(\nu_1,\nu_2) + \gamma_1(\nu_2,\nu_2)
\end{equation*}

\noindent with the covariance functions $\gamma_1(.,.)$ and $\gamma_2(.,.)$
are respectively defined in Equation (\ref{gamma1}) and Equation (\ref%
{gamma2});

\noindent and

\begin{equation*}
\Gamma^\ast\left(F^\ast,\beta^\ast_{\nu} \right) = \int_{[0,1]}\left\{
\nu_2(s) \int_{[0,1]\times (0,s)} F^{\ast}_J(u,v)\,dC(u,v)\right\}\,ds
\end{equation*}
\begin{equation*}
- \int_{[0,1]}\left\{\nu_1(s) \int_{(0,s)\times [0,1]}
F^{\ast}_J(u,v)\,dC(u,v)\right\}\,ds
\end{equation*}
\begin{equation*}
- \mathbb{P}_{\left(U^{(1)},U^{(2)}\right)}\left(F^{\ast}_{J}\right)\, \int_{[0,1]}
s\left(\nu_2(s) - \nu_1(s)\right)\,ds.
\end{equation*}
\end{theorem}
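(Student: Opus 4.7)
The plan is to mimic the strategy already deployed in the proof of Theorem \ref{theodeltaGI}, with the score function $F^\ast$ replaced by $F^\ast_J$ and the residual process $\beta^\ast_{(2)}(L)$ replaced by the GPI residual $\beta^\ast_{(2)}(\nu)$. The starting ingredient is the representation theorem of the GPI established in \cite{losall}, which, once rewritten in the bivariate Gaussian field described in Subsection \ref{subsec12}, yields at each date $i=1,2$ the expansion
$$
\sqrt{n}\,(J_n(i)-J(i)) = \mathbb{T}_{n,(2)}\bigl(g_i\circ F_{(2),i}^{-1}\circ \pi_i\bigr) + \int_{[0,1]} \nu_i(s)\,\mathbb{T}_{n,(2)}(f_{i,s})\,ds + o_p(1).
$$
Taking the difference of the two dates and using the linearity of $\mathbb{T}_{n,(2)}$ together with the definitions of $F^\ast_J$ and $\beta^\ast_{(2)}(\nu)$ immediately gives $\sqrt{n}(\Delta J_n(1,2)-\Delta J(1,2)) = \mathbb{T}_{n,(2)}(F^\ast_J) + \beta^\ast_{(2)}(\nu) + o_p(1)$. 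The square-integrability hypotheses on $f_{1,s},f_{2,s}$ and $F^\ast_J$ place us inside a Donsker class, so that $\mathbb{T}_{n,(2)}$ converges weakly to the centered Gaussian field $\mathbb{T}_{(2)}$ with covariance $\Gamma^\ast$ from Equation (\ref{Gamma}); by Fubini the integrated residual $\beta^\ast_{(2)}(\nu)$ converges to its Gaussian counterpart $\beta^\ast(\nu)$. Hence the limit $\mathcal{G}_{\Delta GPI} = \mathbb{T}_{(2)}(F^\ast_J)+\beta^\ast(\nu)$ is centered Gaussian.

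The second step is to identify the three blocks of the variance $\sigma^2_{\Delta GPI}$. Expanding $\mathbb{E}\bigl((\mathbb{T}_{(2)}(F^\ast_J)+\beta^\ast(\nu))^2\bigr)$ splits the variance into $\Gamma^\ast(F^\ast_J,F^\ast_J)$, $\Gamma^\ast(\beta^\ast_\nu,\beta^\ast_\nu)$ and twice the cross term $\Gamma^\ast(F^\ast_J,\beta^\ast_\nu)$. The first block comes straight from Equation (\ref{Gamma}) applied to $F^\ast_J$. For the second block, I would expand the product defining $\beta^\ast(\nu)^2$, swap expectation and integration by Fubini, and use the three elementary covariances $\Gamma^\ast(f_{j,s},f_{k,t})$ already computed in the proof of Theorem \ref{theodeltaGI}: these equal $\min(s,t)-st$ when $j=k$ and $C(s,t)-st$ when $j\neq k$. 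Matching with Equations (\ref{gamma1}) and (\ref{gamma2}) gives the claimed decomposition $\gamma_1(\nu_1,\nu_1)-2\gamma_2(\nu_1,\nu_2)+\gamma_1(\nu_2,\nu_2)$.

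The main obstacle, and by far the most tedious step, is the cross-covariance $\Gamma^\ast(F^\ast_J,\beta^\ast_\nu)$, since it mixes the copula $C$, the marginal functions $f_{i,s}$ and the score $F^\ast_J$. I would write
$$
\Gamma^\ast(F^\ast_J,\beta^\ast_\nu) = \int_{[0,1]}\nu_2(s)\,\Gamma^\ast(F^\ast_J,f_{2,s})\,ds - \int_{[0,1]}\nu_1(s)\,\Gamma^\ast(F^\ast_J,f_{1,s})\,ds,
$$
then evaluate the inner covariances using the identities $\mathbb{P}_{(U^{(1)},U^{(2)})}(f_{1,s})=\mathbb{P}_{(U^{(1)},U^{(2)})}(f_{2,s})=s$ and
$$
\mathbb{E}\bigl(F^\ast_J(U^{(1)},U^{(2)})\,\mathbf{1}_{(0,s)}(U^{(2)})\bigr) = \int_{[0,1]\times (0,s)} F^\ast_J(u,v)\,dC(u,v),
$$
with the symmetric expression along the first coordinate. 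Assembling these pieces and factoring out the term $\mathbb{P}_{(U^{(1)},U^{(2)})}(F^\ast_J)\int s(\nu_2(s)-\nu_1(s))\,ds$ produces exactly the announced formula. No genuinely new analytic ingredient is needed beyond what was used for the Gini variation; the difficulty is purely notational, namely carrying the copula $C$, the two margins and the date indices consistently through the calculation.
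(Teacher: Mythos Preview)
Your proposal is correct and essentially follows the same route as the paper's own proof of Theorem~\ref{theodeltaGI}, transposed from the Gini setting to the GPI. Note, however, that the paper itself does not spell out a proof of Theorem~\ref{theodeltaGPI}: it simply refers the reader to Mergane and Lo \cite{merganelo13}. What you have written is therefore more detailed than what the paper provides here, but it is fully consistent with the methodology deployed in the surrounding theorems and with what the cited reference contains.
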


\begin{proof}
See Mergane and Lo (\cite{merganelo13}).
\end{proof}

\bigskip


\noindent We are now able to stable our main results.

\bigskip

\subsection{Mutual influence}

\noindent Let 
\begin{equation*}
R = \frac{\Delta J(1,2)}{\Delta GI(1,2)},\; a = \frac{1}{\Delta GI(1,2)}\: 
\text{ and } \: b = \frac{\Delta J(1,2)}{\left(\Delta GI(1,2)\right)^2}.
\end{equation*}

\begin{theorem} \label{theoR} Supposing that the above mentioned hypotheses are true, then

\begin{equation*}
\left(\sqrt{n}\left(\Delta J_n(1,2) - \Delta J(1,2) \right) , \sqrt{n}%
\left(\Delta GI_n(1,2) - \Delta GI(1,2) \right) \right)^t \rightarrow_d 
\mathcal{N}_2\left(0,\Sigma\right),
\end{equation*}
with 
\begin{equation*}
\Sigma = \left(%
\begin{array}{ccc}
\sigma^2_{\Delta GPI} &  & \sigma_{\Delta GPI,\Delta GI} \\ 
&  &  \\ 
\sigma_{\Delta GPI,\Delta GI} &  & \sigma^2_{\Delta GI} \\ 
&  & 
\end{array}%
\right) 
\end{equation*}

\noindent where

\begin{equation*}
\sigma_{\Delta GPI,\Delta GI} = \Gamma^\ast\left(F^\ast_J,F^\ast\right) +
\Gamma^\ast\left(F^\ast_J,\beta^\ast_L\right) -
\Gamma^\ast\left(F^\ast_J,\tilde{F}^\ast\right) 
\end{equation*}
\begin{equation*}
+ \Gamma^\ast\left(F^\ast,\beta^\ast_{\nu}\right) +
\Gamma^\ast\left(\beta^\ast_{L},\beta^\ast_{\nu}\right) -
\Gamma^\ast\left(\tilde{F}^\ast,\beta^\ast_{\nu}\right) 
\end{equation*}

\bigskip

\noindent with

\begin{equation*}
\Gamma^\ast\left(\beta^\ast_{L},\beta^\ast_{\nu}\right) =
\gamma_1(L_1,\nu_1) + \gamma_1(L_2,\nu_2) - \gamma_2(L_1,\nu_2) -
\gamma_2(L_2,\nu_1); 
\end{equation*}

\noindent $\sigma^2_{\Delta GPI}$ and $\sigma^2_{\Delta GI}$ are given in the previous theorems.

\bigskip

\noindent Further,

\bigskip

$\sqrt{n}\left\{R_n(1,2) - R(1,2)\right\}$ tends to a functional Gaussian
process 
\begin{equation*}
a\,\mathcal{G}_{\Delta GPI}\, -\, b\,\mathcal{G}_{\Delta GI}
\end{equation*}

\noindent of variance-covariance function

\begin{equation*}
\sigma^2_R = a^2\,\sigma^2_{\Delta GPI} + b^2\,\sigma^2_{\Delta GI} -
\,2\,a\,b\,\sigma_{\Delta GPI,\Delta GI}.
\end{equation*}
\end{theorem}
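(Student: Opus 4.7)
The plan is to exploit the fact that both Theorem \ref{theodeltaGI} and Theorem \ref{theodeltaGPI} give asymptotic representations of $\sqrt{n}(\Delta GI_n-\Delta GI)$ and $\sqrt{n}(\Delta J_n-\Delta J)$ as continuous linear functionals of the \emph{same} bidimensional functional empirical process $\mathbb{T}_{n,(2)}$ built from the copula observations $(U^{(1)}_i,U^{(2)}_i)$. This is the whole point of the unified Gaussian-field setup in Subsection \ref{subsec12}: once two statistics are written on the common space $\ell^\infty$ indexed by the same Donsker class, joint weak convergence follows from finite-dimensional convergence plus a straightforward application of the multivariate CLT, and covariances are computed by polarization of the variance formulas.

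Concretely, I would first recall the two representations
\[
\sqrt{n}\bigl(\Delta GI_n-\Delta GI\bigr)=\mathbb{T}_{n,(2)}(F^{\ast})+\beta^{\ast}_{n,(2)}(L)-\mathbb{T}_{n,(2)}(\tilde F^{\ast})+o_p(1),
\]
\[
\sqrt{n}\bigl(\Delta J_n-\Delta J\bigr)=\mathbb{T}_{n,(2)}(F^{\ast}_J)+\beta^{\ast}_{n,(2)}(\nu)+o_p(1).
\]
Under the integrability hypotheses assumed in Theorems \ref{theodeltaGI} and \ref{theodeltaGPI}, each vector of linear functionals of $\mathbb{T}_{n,(2)}$ is asymptotically centered Gaussian, so the Cram\'er--Wold device gives the joint bivariate normality with limit $(\mathcal{G}_{\Delta GPI},\mathcal{G}_{\Delta GI})^t\sim \mathcal{N}_2(0,\Sigma)$. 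The diagonal entries of $\Sigma$ are already identified in the two source theorems, so the only new object is the cross-covariance.

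To obtain $\sigma_{\Delta GPI,\Delta GI}$, I would expand
\[
\sigma_{\Delta GPI,\Delta GI}=\mathbb{E}\bigl[\bigl(\mathbb{T}_{(2)}(F^{\ast}_J)+\beta^{\ast}(\nu)\bigr)\bigl(\mathbb{T}_{(2)}(F^{\ast})+\beta^{\ast}(L)-\mathbb{T}_{(2)}(\tilde F^{\ast})\bigr)\bigr]
\]
by bilinearity, obtaining the six announced terms $\Gamma^\ast(F^{\ast}_J,F^{\ast})$, $\Gamma^\ast(F^{\ast}_J,\beta^{\ast}_L)$, $-\Gamma^\ast(F^{\ast}_J,\tilde F^{\ast})$, $\Gamma^\ast(F^{\ast},\beta^{\ast}_\nu)$, $\Gamma^\ast(\beta^{\ast}_L,\beta^{\ast}_\nu)$ and $-\Gamma^\ast(\tilde F^{\ast},\beta^{\ast}_\nu)$. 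The mixed residual--residual term $\Gamma^\ast(\beta^{\ast}_L,\beta^{\ast}_\nu)$ is the only one requiring genuine computation: writing it as a double integral and using $\mathbb{E}[\mathbb{T}_{(2)}(f_{1,s})\mathbb{T}_{(2)}(f_{1,t})]=\min(s,t)-st$, the analogous identity for $f_{2,\cdot}$, and $\mathbb{E}[\mathbb{T}_{(2)}(f_{1,s})\mathbb{T}_{(2)}(f_{2,t})]=C(s,t)-st$ as in the proof of Theorem \ref{theodeltaGI}, one recovers the four-term sum $\gamma_1(L_1,\nu_1)+\gamma_1(L_2,\nu_2)-\gamma_2(L_1,\nu_2)-\gamma_2(L_2,\nu_1)$. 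The remaining $\Gamma^\ast$ terms involving one $F^{\ast}$-type function and one $\beta^\ast$ are handled exactly as in the proof of Theorem \ref{theodeltaGPI}, just with the functions $F^{\ast}_J$, $F^{\ast}$, $\tilde F^{\ast}$ in the appropriate slots.

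For the conclusion on $R_n$, I would apply the delta method to the map $\phi(x,y)=x/y$ at the point $(\Delta J(1,2),\Delta GI(1,2))$ (legitimate since $\Delta GI(1,2)\neq 0$), whose gradient is $(a,-b)$. Composing with the joint CLT above yields $\sqrt{n}(R_n-R)\to a\,\mathcal{G}_{\Delta GPI}-b\,\mathcal{G}_{\Delta GI}$, a centered Gaussian of variance
\[
\sigma_R^2=a^2\sigma^2_{\Delta GPI}+b^2\sigma^2_{\Delta GI}-2ab\,\sigma_{\Delta GPI,\Delta GI},
\]
as claimed. The main obstacle is purely bookkeeping in the computation of $\Gamma^\ast(\beta^{\ast}_L,\beta^{\ast}_\nu)$: one must be careful to match each $L_j$ with $\nu_k$ for both $j=k$ (yielding the $\gamma_1$ contributions through $\min(s,t)-st$) and $j\neq k$ (yielding the $\gamma_2$ contributions through $C(s,t)-st$), and to track the minus signs coming from the signs in the definitions of $\beta^\ast_L$ and $\beta^\ast_\nu$.
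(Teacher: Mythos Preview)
Your proposal is correct and follows essentially the same route as the paper: recall the two asymptotic representations from Theorems \ref{theodeltaGI} and \ref{theodeltaGPI}, deduce joint Gaussianity of the pair (the paper invokes van der Vaart and Wellner where you invoke Cram\'er--Wold, which amounts to the same thing here), and expand $\mathbb{E}(\mathcal{G}_{\Delta GPI}\,\mathcal{G}_{\Delta GI})$ by bilinearity into the six $\Gamma^\ast$ terms, referring back to the computations of the earlier theorems for the explicit forms. You are in fact more thorough than the paper on two points: you spell out the four-term decomposition of $\Gamma^\ast(\beta^\ast_L,\beta^\ast_\nu)$ via the covariance identities for $f_{j,s}$, and you supply the delta-method argument for $R_n$, which the paper's proof leaves implicit.
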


\bigskip

\begin{proof} 
By Theorem \ref{theodeltaGI} and Theorem \ref{theodeltaGPI}, it is clear
that from Van der Vaart and Wellner (\cite{vaart}, p. $81$), the bivariate random variable 

\begin{equation*}
\left(\sqrt{n}\left(\Delta J_n(1,2) - \Delta J(1,2) \right) , \sqrt{n}%
\left(\Delta GI_n(1,2) - \Delta GI(1,2) \right) \right) 
\end{equation*}

\noindent is asymptotically Gaussian $\left(\mathcal{G}_{\Delta GPI}, 
\mathcal{G}_{\Delta GI}\right)$ with

\begin{equation*}
\sigma_{\Delta GPI,\Delta GI} = \mathbb{E}\left(\mathcal{G}_{\Delta GPI}\, 
\mathcal{G}_{\Delta GI}\right).
\end{equation*}

\noindent But let's recall that

\begin{equation*}
\mathcal{G}_{\Delta GPI} = \mathbb{G}_{(2)}\left(F^{\ast}_{J}\right) +
\beta^\ast(\nu) 
\end{equation*}

\noindent and

\begin{equation*}
\mathcal{G}_{\Delta GI} = \mathbb{G}_{(2)}\left(F^{\ast}\right) + \beta^\ast(L)
- \mathbb{G}_{(2)}\left(\tilde{F}^{\ast}\right),
\end{equation*}

\noindent then

\begin{equation*}
\sigma_{\Delta GPI,\Delta GI} = \mathbb{E}\left(\mathcal{G}_{\Delta GPI} 
\mathcal{G}_{\Delta GI} \right);
\end{equation*}

\noindent by expanding this, we find the following expression

\begin{equation*}
\sigma_{\Delta GPI,\Delta GI} =\Gamma^\ast\left(F^\ast_J,F^\ast\right) +
\Gamma^\ast\left(F^\ast_J,\beta^\ast_L\right) -
\Gamma^\ast\left(F^\ast_J,\tilde{F}^\ast\right) 
\end{equation*}
\begin{equation*}
+ \Gamma^\ast\left(F^\ast,\beta^\ast_{\nu}\right) +
\Gamma^\ast\left(\beta^\ast_{L},\beta^\ast_{\nu}\right) -
\Gamma^\ast\left(\tilde{F}^\ast,\beta^\ast_{\nu}\right), 
\end{equation*}

\noindent and we can obtain the complete expression for each covariance by using the same procedure as in the theorems \ref{theodeltaGI} and \ref{theodeltaGPI}.
\end{proof}

\bigskip


\section{Final Comments} \label{App}

\noindent We have shown hat the approach we used here, once set up, leads to powerful asymptotic laws. Besides, the construction we use allow to couple the results on the Gini index with results on aby other index as long as they are expressed in the current frame. We will not need to begin from scratch.\\

\noindent However, the variances and co-variance may have not simple forms. But this is not a major concern in the modern time of powerful computers. For example the variance and co-variance and co-variance given here may easily be performed with the free software of R.\\

\noindent To avoid to make more lengthy the paper, we decided to prepare and publish, in a very near future, papers devoted to computational methods and simulations in which we will share the codes and papers with focus on data analysis.\\


\end{document}